\documentclass[prl,aps,twocolumn,superscriptaddress,nofootinbib,preprintnumbers]{revtex4-2}
\let\phi\varphi

\usepackage[dvipsnames]{xcolor}
\usepackage{lipsum}
\usepackage{amsmath}
\usepackage{amsfonts}
\usepackage{amsthm}
\usepackage{amssymb}
\usepackage{cancel}
\usepackage{relsize}
\usepackage[utf8]{inputenc}
\usepackage{graphicx}
\usepackage[T1]{fontenc}
\usepackage{setspace}
\usepackage[ colorlinks = true,
             linkcolor = MidnightBlue,
             urlcolor  = MidnightBlue,
             citecolor = orange,
             anchorcolor = ForestGreen,
]{hyperref}
\usepackage[capitalize,nameinlink]{cleveref}
\AddToHook{cmd/appendix/before}{\crefalias{section}{appendix}}
\usepackage[english]{babel}
\usepackage{pifont}
\usepackage{placeins}
\usepackage[font=small,labelfont=bf]{caption}
\usepackage{url}
\usepackage[shortcuts]{extdash}
\usepackage{tikz}
\usetikzlibrary{arrows.meta}
\usetikzlibrary{decorations.pathmorphing}
\usetikzlibrary{shapes}
\tikzstyle{causalarrow} = [draw=black,thick,decorate,decoration={snake,amplitude=.7mm,pre length=1mm,post length=2mm}]
\tikzstyle{affectsarrow} = [draw=black,thick]
\tikzstyle{vvarrow} = [draw=black,thick,decorate,decoration={zigzag,amplitude=+1.3pt,segment length=+7pt,pre length=1.5pt,post length=+7pt}]
\tikzstyle{routedarrow} = [draw=black,very thick]
\usetikzlibrary{decorations.markings}
\tikzset{
  strike through/.style={
    postaction=decorate,
    decoration={
      markings,
      mark=at position 0.5 with {
        \draw[-] (-3pt,-3pt) -- (3pt, 3pt);
      }
    }
  }
}
\tikzset{
  defcolon/.style={
    postaction=decorate,
    decoration={
      markings,
      mark=at position -0.05 with {$\colon$}
    }
  }
}
\usepackage{tikz-3dplot}
\tdplotsetmaincoords{80}{45}
\tdplotsetrotatedcoords{-90}{180}{-90}
\usepackage{verbatim} % Add comment environment
\usepackage{bbold} % makes $\mathbb{1}$ work
\usepackage{braket} % brackets
\usepackage{csquotes} % biblatex wanted this

\usepackage{stmaryrd} % lightning for math!
\SetSymbolFont{stmry}{bold}{U}{stmry}{m}{n} % lightning for math yields warnings...
\usepackage[header]{appendix}
\usepackage{mathtools} % provides mathmakebox
\usepackage{mathrsfs} % provides mathscr
\usepackage{booktabs} % midrule
\usepackage{mdframed}
\usepackage{array}
\usepackage{bm}

\usepackage{xstring}
\usepackage{subcaption}
\usepackage{multirow}

\graphicspath{
	{../tikz/}
	{../pics/}
    {pics/}
}

\let\AA\relax
\newcommand{\AA}{\ensuremath{\mathcal{A}}}
\newcommand{\BB}{\ensuremath{\mathcal{B}}}

\newcommand{\TT}{\ensuremath{\mathcal{T}}}
\newcommand{\WW}{\ensuremath{\mathcal{W}}}
\newcommand{\XX}{\ensuremath{\mathcal{X}}}
\newcommand{\YY}{\ensuremath{\mathcal{Y}}}
\newcommand{\ZZ}{\ensuremath{\mathcal{Z}}}
\newcommand{\SC}{\ensuremath{\mathcal{S}}}

\newcommand{\Fut}{\bar{\mathcal{F}}}

\newcommand{\abs}[1]{\left| #1 \right|}

\newcommand{\unord}{\ensuremath{\not\preceq \not\succeq}}

\newcommand{\editorial}[1]{}

\newcommand{\cG}{\mathcal{G}}

\DeclareMathOperator{\spann}{span}

\DeclareMathOperator{\doo}{do}

\DeclareMathOperator{\affects}{\mathrel{\vDash}}
\DeclareMathOperator{\naffects}{\not\affects}
\DeclareMathOperator{\cause}{\twoheadrightarrow}

\DeclareMathOperator{\given}{\,|\,}

\usepackage{scalerel}

\usepackage{thm-restate}
\makeatletter
\newcommand{\IfRestatedTF}[2]{\ifthmt@thisistheone #2\else #1\fi}
\makeatother
\usepackage{thmtools} 

\newtheoremstyle{note}%hnamei
{1.2em}                %   Space above
{1.2em}                %   Space below
{}                     %   Body font
{}                     %   Indent amount
{\bfseries}    %   Theorem head font
{.}                    %   Punctuation after theorem head
{.5em}                 %   Space after theorem head
{}                     %   Theorem head spec (can be left empty, meaning `normal')
\theoremstyle{note}

\newtheorem{theorem}{Theorem}[section]
\newtheorem{definition}[theorem]{Definition}

\newtheorem{lemma}[theorem]{Lemma}
\newtheorem{corollary}[theorem]{Corollary}

\newtheorem{remark}[theorem]{Remark}

\usepackage{centernot}
\begin{document}
\author{Maarten Grothus}
    \affiliation{Univ.\ Grenoble Alpes, Inria, 38000 Grenoble, France}
    \affiliation{Univ.\ Grenoble Alpes, CNRS, Grenoble INP, Institut N\'eel, 38000 Grenoble, France}
\author{V. Vilasini}
    \affiliation{Univ.\ Grenoble Alpes, Inria, 38000 Grenoble, France}
    \affiliation{Institute for Theoretical Physics, ETH Zürich, 8093 Zürich, Switzerland}
\title{
    Impossibility of superluminal signalling rules out causal loops in conical spacetimes}

\begin{abstract}
In PRL 129, 110401 it was shown that it is theoretically possible to have operationally detectable causal loops without violating the principle of no superluminal signalling (NSS) in ($1+1$)-Minkowski spacetime.
Whether or not such causal loops are also possible in $d > 1$ spatial dimensions, has remained a key open question.
We resolve this question by showing that in a wide class of \enquote{conical} spacetimes, including Minkowski with $d > 1$, NSS does rule out all operationally detectable causal loops, in classical, quantum and post-quantum theories.
This establishes that the relationship between the relativistic principles of NSS and no causal loops depends inherently on the geometry of spacetime.
\end{abstract}

\maketitle

The interface between information-theoretic and spatiotemporal notions of causality has attracted growing interest in recent years, motivated by the recognition that these two conceptions of causal order are markedly distinct and yet must interact consistently in physical scenarios. On the spatiotemporal side, exotic causal structures such as closed timelike curves (CTCs) in general relativity have often been considered unphysical. On the information-theoretic side, by contrast, a rich landscape of cyclic causal structures---both classical and non-classical---has been developed and garnered interest for their information-processing consequences, extending well beyond the fixed acyclic causal models of statistics and quantum foundations (e.g., \cite{Deutsch1991,Lloyd2011,Oreshkov2012,Chiribella2013,Lugano2014}). Reconciling these perspectives raises a fundamental question: when an information-theoretic causal structure is embedded in an acyclic (CTC-free) spacetime consistently with relativistic causality principles, must it itself be acyclic?

This question was first considered in~\cite{VVC, VVC_Letter} by imposing relativistic principles such as no-superluminal signalling (NSS) which relate the two types of causal structures. Popular belief, backed by special relativity, suggests that NSS in Minkowski spacetime should rule out causal loops. Surprisingly, this turns out to be false in general. Within an axiomatic, theory-independent formalism (the \emph{affects framework})~\cite{VVC}, built on causal modelling~\cite{Spirtes1993,Pearl2009}, it was shown in~\cite{VVC_Letter} that there exist hypothetical theories in which NSS in $(1+1)$-Minkowski spacetime can consistently coexist with causal loops that are nonetheless \emph{operationally verifiable} through their observable signalling relations (formally called \emph{affects causal loops}, ACLs). The underlying reason is that causation and signalling, often conflated, are inequivalent in the presence of fine-tuning, where causal mechanisms are carefully tuned to hide observable signalling. As a consequence, NSS does not, in general, imply no superluminal causation~\cite{VVCPR}, and~\cite{VVC_Letter} exploited this in $(1+1)$-Minkowski to construct causal loops consistent with NSS. 

Whether the same is true in $(d+1)$-Minkowski spacetimes with $d>1$ has remained a key open problem, which we address in this work. While a negative answer was conjectured in~\cite{VVC,VVC_Letter}, a significant challenge in proving it comes from the fact that the affects framework admits a vast landscape of possible ACLs involving arbitrary numbers of systems, whose verifiability is certified through complex patterns of signalling relations (\emph{higher-order affects relations}); ruling them all out requires controlling this entire set of ACLs at once. Moreover, the precise geometric feature of spacetime responsible for the $d=1$ versus $d>1$ distinction was, until recently, unclear. Our recent work~\cite{Grothus2024} introduced tools to characterise higher-order affects relations and identified a precise order-theoretic property of spacetime---\emph{conicality}---that  holds in $(d+1)$-Minkowski for $d>1$ but fails for $d=1$. Beyond that, \cite{Paganini2026} shows that conicality holds for a wide class of causally regular spacetimes.
These works however still left open the main question considered here, namely whether these tools are sufficient to rule out the entire landscape of ACLs in conical spacetimes. 

Here, we close this gap definitively, building on~\cite{VVC_Letter,Grothus2024} to prove a general no-go result: in \emph{any} conical spacetime, NSS rules out \emph{all} operationally verifiable causal loops (or ACLs), in contrast to the situation in non-conical spacetimes~\cite{VVC_Letter}. This establishes that the relationship between the relativistic principles of NSS and no causal loops (NCL) depends on the geometry of spacetime.
These results may therefore bear relevance for the larger research program on understanding the interface of quantum information theory and spacetime structure.

\section{Brief overview of the affects framework and conicality}
\label{sec:overview}
{\bf Causal models, signalling and causal inference} The \emph{affects framework}~\cite{VVC} first provides a general definition of causal models, consisting of (1) a directed graph $\mathcal{G}$ with a subset $S = \{e_X, e_Y, \ldots\}$ of nodes called observed nodes (these are associated with classical random variables, RVs) and the remaining nodes being unobserved (can be described by classical, quantum or post-quantum systems), and (2) a probability distribution $P_\cG (S)$ on the observed nodes of $\mathcal{G}$ which satisfies a linking property relative to $\mathcal{G}$.
Specifically, the affects framework employs $d$-separation \cite{Pearl1990,Pearl2009,arxiv.1302.3595} for this linking property,  ensuring that nodes $d$-separated in the graph are conditionally independent of one another.
This property holds in all acyclic (classical and non-classical) causal models \cite{Pearl2009, Henson2014, arxiv.1906.10726} as well as some cyclic causal models \cite{Bongers_2021, VVC}.
We note that the definition of causal models used here is minimal and top-down:
it does not refer to or make assumptions about the \emph{causal mechanisms} (systems, states, transformations or channels) of the underlying physical theory which is suitable for general no-go theorems.
This is in contrast to typical approaches to classical and non-classical causal models, that are bottom-up, that specify and study such causal mechanisms on a graph (e.g., \cite{Pearl2009,arxiv.1906.10726, Henson2014,Costa2016,Bongers_2021,Barrett2021, Ferradini2025quantum}). 

The framework then defines signalling between observed RVs of such a general causal model, via the notion of an \emph{affects relation}.
If intervening freely on a set of RVs $X \subset S$ changes the distribution of another set $Y \subset S$, then we say $X$ affects $Y$ and write $X \affects Y$. In particular, $X \affects Y$ allows one to infer that there exist $e_X \in X$ and $e_Y \in Y$ such that there is a directed path from $e_X$ to $e_Y$ in the causal structure $\mathcal{G}$ (or $e_X$ is a \emph{cause} of $e_Y$), which we will write succinctly as $X \cause Y$. 
In contrast, in so-called \emph{fine-tuned} causal models, where d-connection in $\cG$ does not necessarily imply a correspondent observable correlation, a causal relation $X \cause Y$ does not imply signalling from $X$ to $Y$, i.e.\ $X \affects Y$.

The framework also defines more general forms of signalling, via \emph{higher-order affects relations}. These take the form $X\affects Y \given \doo(Z)$ ($X$ affects $Y$ given $\doo(Z)$) and capture that a set of RVs $X$ signals to the set $Y$, given that some interventions have been performed on another set $Z$.
When $Z=\emptyset$, we call it a $0^\text{th}$-order affects relation that is equivalent to $X\affects Y$.
A formal definition of affects relations, encompassing both the 0$^\text{th}$ and the higher-order case, is given in \cref{def:affects-cond}.

Generally, a (higher-order) affects relation can contain redundancies, e.g., $X\affects Y \given \doo(Z)$ with $X=\{X_1,X_2\}$, where $X_2$ is a redundant RV without any causal connections.
To rule out such scenarios, a notion of irreducibility has been introduced for the first~\cite{VVC} (Irred$_1$) and the third~\cite{Grothus2024} (Irred$_3$) argument of an affects relation (i.e., the arguments $X$ and $Z$ for the relation $X \affects Y \given \doo(Z)$), respectively. 
If $X\affects Y \given \doo(Z)$ is Irred$_1$, then it allows us to infer that in any underlying causal model leading to this affects relation, for all $e_X\in X$ (as opposed to \emph{at least one} $e_X$ for the case without irreducibility), there exists $e_Y\in Y$ such that $e_X\cause e_Y$.
We write: for all $e_X\in X$, $e_X\cause Y$.
In \cite{Grothus2024}, we showed that if $X\affects Y \given \doo(Z)$ is Irred$_3$, one can infer that for all $e_Z\in Z$, $e_Z \cause Y$. %

\begin{figure*}[t]
	\centering
    \begin{subfigure}[b]{0.45\textwidth}
        \centering
        \begin{tikzpicture}[scale=.9]
            \node (X) at (0,0) {$X$};
            \node (Y) at (0,2) {$Y$};
            \node (A) at (2,0) {$A$};
            \node (B) at (2,2) {$B$};
            \node (Z) at (4,2) {$Z$};

            \begin{scope}[>={to[black]},
                          every edge/.style=affectsarrow]
                \path [->>] (A) edge (X);
                \path [->>] (X) edge (Y);
                \path [dashed,->>] (Y) edge (A);
                \path [dashed,->>] (Y) edge (B);
                \path [->>] (B) edge (Z);
                \path [dashed,->>] (Z) edge (A);
                \path [dashed,->>] (Z) edge[bend left] (B);
            \end{scope}
        \end{tikzpicture}
        \caption{Implications regarding the causal structure}
        \label{fig:ACL7-clg}
    \end{subfigure}%
    \begin{subfigure}[b]{0.45\textwidth}
        \centering
        \begin{tikzpicture}[dot/.style={circle,inner sep=1pt,fill,name=#1},scale=0.7]

            \node [dot=A,label=$\AA$] at (0,0) {};
            \node [dot=X,label=$\XX$] at (1,1) {};
            \node [dot=Y,label=$\YY$] at (2,2) {};
            \node [dot=Z,label=$\ZZ$] at (3.7,1.8) {};
            \node [dot=B,label=$\BB$] at (5,0.5) {};

            \node (left) at (-0.5,-0.5) {};
            \node (right) at (6,-0.5) {};
            \draw (left) -- ++(4,4);
            \draw (right) -- ++(-4,4);
            \fill[fill=blue!20] (2.75,2.75) -- (3.5,3.5) -- (2,3.5) -- (2.75,2.75);

            \node [dot=join,label=$a$] at (2.75,2.75) {};

        \end{tikzpicture}
        \caption{Embedding in ($1+1$)-Minkowski spacetime}
        \label{fig:ACL7-compat}
    \end{subfigure}%
	\caption{
        Let $X, Y, Z, A, B \in S$. Consider $X \affects Y, Y \affects AB, A \affects X, Z \affects AB, B \affects Z$.
        \textbf{(a)} No matter the underlying causal model, the affects relations allow to infer that a causal loop will always be present.
        Here, $\cause$ denotes causal relations between two RVs, while its dashed version stands for alternative causal relations, e.g.\ $Y \cause AB$ being equivalent to $(Y \cause A \ \text{or} \ Y \cause B)$, so at least one of the dashed arrows from $Y$ must be present as a causal relation. 
        \textbf{(b)} A compatible, non-degenerate embedding of these affects relations into ($1+1$)-Minkowski spacetime, where all ORVs need be located on certain light-like surfaces.
    }
	\label{fig:ACL7}
\end{figure*}%

{\bf Spacetime embedding, NSS and causal loops} The affects framework then also defines an \emph{embedding} of the set of RVs $S$ of a causal model into a spacetime, modelled as a partially ordered set (\emph{poset}) $\TT$ where the order captures the spacetime's light cone structure. This is done by assigning each RV $e_X$ a \emph{location} $O(e_X) \in \TT$, defining an ordered RV (ORV) $e_\XX := (e_X, O(e_X))$.
With the future light cone of an ORV being defined as
$\Fut (e_\XX) := \{ a \in \TT : a \succeq O(e_X) \}$, the joint future of a set $\XX$ of ORVs is the intersection of their respective future light cones, $\Fut_s (\XX) \coloneqq \bigcap_{e_\XX\in \XX} \Fut (e_\XX)$.
Having embedded information-theoretic causal models in spacetime, the relativistic principle of no superluminal signalling (NSS) is formalised as a compatibility condition between the affects relations of the causal model (capturing signalling) and the partial order of $\TT$ (capturing light cone structure).
Specifically, for each affects relation $X \affects Y \given \doo(Z)$ we require
\[
    X \affects Y \given \doo(Z) \ \text{is} \ \mathrm{Irred}_1
    \
    \implies
    \
    \Fut_s (\YY) \cap \Fut_s(\ZZ) \subseteq \Fut_s (\XX) \, .
\]
In \cite{VVC} it is further explained how this condition captures the NSS principle in spacetime, and Irred$_1$ plays a key role here. Further, an embedding is considered \emph{non-degenerate} if no two RVs are embedded into the same location.

Based on the causal inference rules mentioned above, we may (potentially) infer operationally detectable information-theoretic cycles from affects relations (i.e.\ signalling) alone~-- so-called \emph{affects causal loops} (ACLs)~\cite{VVC}.
The simple-most example of such a loop is given by two RVs affecting one another, i.e.\ both $e_X \affects e_Y$ and $e_Y \affects e_X$, where NSS implies a degenerate embedding with $O(e_X) = O(e_Y)$. Several other complex classes of ACLs were defined in \cite{VVC}.
In \cref{fig:ACL7} we illustrate a more complex example of such a causal loop which admits a non-degenerate and compatible (i.e., satisfying NSS) embedding in $(1+1)$-Minkowski spacetime. 

\begin{figure*}
    \centering
    \begin{subfigure}[b]{0.45\textwidth}
        \centering
		\begin{tikzpicture}[dot/.style={circle,inner sep=1pt,fill,name=#1},scale=1]

			\node [dot=A,label=$a$] at (1,1) {};
			\node [dot=X,label=$x$] at (2,2) {};
			\node [dot=Z,label=$y$] at (3.7,1.8) {};
			\node [dot=B,label=$b$] at (4.5,1) {};

			\node (left) at (0.5,0.5) {};
			\node (right) at (5,0.5) {};
			\draw (left) -- ++(3.5,3.5);
			\draw (right) -- ++(-3.5,3.5);
			\fill[fill=blue!20] (2.75,2.75) -- (4,4) -- (1.5,4) -- (2.75,2.75);
		\end{tikzpicture}
        \caption{$d=1$}
        \label{fig:d-1}
    \end{subfigure}%
    \begin{subfigure}[b]{0.55\textwidth}
        \centering
        \begin{tikzpicture}[scale=0.82]
            \begin{scope}
                \clip (0.0,0) circle (2.5);
                \fill[fill=red!20] (3.0,0) circle (2.5);
            \end{scope}

            \begin{scope}
                \clip (0.75,0) circle (1.75);
                \fill[fill=blue!20] (2.25,0) circle (1.75);
            \end{scope}

            \draw[fill=none](0.75,0) circle (1.75);
            \draw[fill=none](0.00,0) circle (2.50);
            \draw[fill=none](2.25,0) circle (1.75);
            \draw[fill=none](3.00,0) circle (2.50);

            \draw[fill=black](0.75,0) circle (1pt) node [above] {\small $x$};
            \draw[fill=black](0.00,0) circle (1pt) node [above] {\small $a$};
            \draw[fill=black](2.25,0) circle (1pt) node [above] {\small $y$};
            \draw[fill=black](3.00,0) circle (1pt) node [above] {\small $b$};
        \end{tikzpicture}
        \caption{$d=2$}
        \label{fig:d-2}
	\end{subfigure}%
    \caption{Representation of light cones in Minkowski spacetime for $d$ spatial dimensions. \textbf{(a)} For $d=1$ Minkowski spacetime does not show conicality. This is because $L_1 \coloneqq \{a,b\}$ and $L_2 \coloneqq \{x,y\}$ are distinct, yet share the same joint future. \textbf{(b)} For $d=2$, Minkowski spacetime is conical. As can be seen from the figure which shows one particular time slice, the joint futures are distinct between $L_1$ and $L_2$ (and the red region is thus not covered by the blue one).}
	\label{fig:d}
\end{figure*}

{\bf Conicality~-- an order-theoretic property of spacetime geometry} 
When embedding the RVs of a causal model into spacetime, modelled as a poset $\TT$, the \enquote{shape} of $\TT$ may constrain the order-theoretic properties of the embedding.
One such property is \emph{conicality} \cite{Grothus2024}, capturing the requirement that the joint future region $\bigcap_{x\in L} \bar J(x)$ (intersection of future light cones $\bar J(x) := \{ a \in \TT : a \succeq x \}$) of a finite set $L \subset \TT$ uniquely determines the location of all points in $L$ that contribute non-trivially to $\bigcap_{x\in L} \bar J(x)$.
In particular, this property is satisfied by ($d+1$)-Minkowski spacetime for $d>1$ spatial dimensions, while it is violated for $d=1$, as shown in \cref{fig:d}.
Moreover,~\cite{Paganini2026} establishes that conicality holds for any causally simple, \emph{future cohesive} spacetime of dimension $d+1$ with $d \ge 2$.

\section{Main Theorem}
\label{sec: main}
\noindent \textbf{Theorem 1.}
In conical spacetimes, all compatible embeddings of ACLs are degenerate.
\medskip

Since compatibility precisely captures the NSS principle \cite{VVC,VVCPR}, this tells us that when non-degenerately embedding a causal model in any conical spacetime, the NSS principle does rule out all operationally detectable causal loops.
In particular, the ACL of \cite{VVC} and of \cref{fig:ACL7} are compatibly and non-degenerately embedded in non-conical ($1+1$)-Minkowski spacetime. 
The theorem, formally presented in \cref{sec:proof}, is proven in two broad steps, that also lead to useful technical results.

{\bf Step 1: Reducing higher-order to equivalent $0^\text{th}$-order affects relations} (\cref{thm:deorder})
We show that for any set of possibly higher-order affects relations, there exists a set of $0^\text{th}$-order affects relations, with the same implications for causal inference and compatibility conditions in conical spacetimes.
Specifically, whenever there is a higher-order affects relation $X \affects Y \given \doo(Z)$ satisfying Irred$_1$, there exists some $s_Z \subseteq Z$, potentially empty, such that $X \affects Y \given \doo(s_Z)$ is Irred$_3$.
Then, the combination of both relations has the same implications for causal inference and imposes the same compatibility constraints (in conical spacetimes) as a $0^\text{th}$-order relation $Xs_Z \affects Y$ being Irred$_1$.
In proving this result, we also advance the understanding of the interplay between Irred$_1$ and Irred$_3$, which is not yet well-understood. It follows that the original set of (possibly higher-order) affects relations implies, via causal inference, an ACL if and only if the attained set of $0^\text{th}$-order affects relations also implies an ACL. In conical spacetimes, the original affects relations similarly satisfy compatibility (NSS) for an embedding if and only if the reduced $0^\text{th}$-order relations do.
Therefore, we now restrict, w.l.o.g., our attention to ACLs formed by $0^\text{th}$-order affects relations.

{\bf Step 2: Ruling out ACLs in $0^\text{th}$-order affects relations embedded in spacetime} (\cref{thm:conical-embedding-no-loops}) We sketch the proof for the case where the ACL is formed only by $0^\text{th}$-order affects relations, illustrating the main technique intuitively by means of an example.
Generally, the causal inference possible from an (0$^\text{th}$-order) affects relation $X \affects Y$ being Irred$_1$ is limited, allowing one to infer only that each $e_X \in X$ is a cause of \emph{some} element $e_Y$ of the second argument $Y$.
Therefore, a causal loop is implied by a set of such affects relations
if and only if no matter which element in the second argument of each affects relation in such a set we \enquote{choose} to be causally influenced by the first argument, the resulting causal structure will be cyclic.
Considering \cref{fig:ACL7-clg} as an example, we have some causal loop no matter whether $Y$ causes $A$ or $B$ (implied by $Y \affects AB$), and whether $Z$ causes $A$ or $B$ (implied by $Z \affects AB$).
We now evaluate compatibility \enquote{along} the causal arrows (which align with the spacetime causal structure), starting from ORV $\mathcal{A}$. 
For a compatible and non-degenerate embedding of these affects relations in any spacetime, $A \affects X \affects Y$ gives $ \Fut(\AA) \supseteq \Fut (\YY)$ and $Y\affects AB$ gives $\Fut (\YY) \supseteq \Fut_s(\AA\BB)$. Next, we show that in conical spacetimes, compatibility further implies strictness of these subset relations on the futures, explicitly: $
    \Fut(\AA) \supsetneq \Fut (\YY) \supsetneq \Fut_s(\AA\BB).
$ Continuing onward for $\BB$ separately, through the same compatibility + conicality implications, we obtain
$
    \Fut(\BB) \supsetneq \Fut(\ZZ) \supsetneq \Fut_s(\AA\BB).
$
Moreover, conicality allows to combine these statements while preserving the strict subset relations, i.e., $\Fut_s(\AA\BB) \supsetneq \Fut_s(\YY\ZZ)$ and $\Fut_s(\YY\ZZ) \supsetneq \Fut_s(\AA\BB)$. (cf.\ \cref{thm:span-join}, note that neither of these is satisfied in the non-conical spacetime embedding shown in \cref{fig:ACL7-compat}). We therefore receive a contradiction of the form
$\Fut_s(\AA\BB) \neq \Fut_s(\AA\BB)$.

{\bf A generalisation and notion of fine-tuning for spacetime embeddings} We also provide a generalisation of the main theorem by defining the notion of 
 \emph{conical embeddings}, which consider the condition of conicality only for the futures of all points $O(\SC)$ in the image of the embedding of the RVs $S$ and the intersections of their futures. While any embedding into a conical poset $\TT$ is trivially conical, we can have non-conical $\TT$ within which there exists a conical embedding.
 This shows that causal loops without superluminal signalling require a form of fine-tuning in the spacetime embedding, captured by non-conicality of the embedding~-- in addition to fine-tuning in the causal model, as noted in \cite{VVC}. Indeed, non-conical embeddings require the joint futures of distinct sets of spacetime events to align in a very precise ``fine-tuned'' manner, as we also see in \cref{fig:ACL7} and \cref{fig:d-1}.

{\bf Causal loops require a specific form of information-theoretic fine-tuning}
Moreover, we also strengthen the observation of \cite{VVC}, by showing that a specific form of fine-tuning in causal models is required for causal loops to be embeddable without superluminal signalling.
Specifically, this is signified through \emph{clustering} of affects relations \cite{Grothus2024}, where certain signalling can only be detected in the correlations between multiple RVs.
Further details and a proof of this result are presented in \cref{sec:clus}.

\section{Discussion}
Our results show that in $(d+1)$-Minkowski spacetime for any $d>1$---and more generally, in any conical spacetime---no-superluminal signalling does rule out operationally verifiable causal loops, thereby fully formalising and proving an informal conjecture made in~\cite{VVC_Letter}, where this was shown to fail in $(1+1)$-Minkowski. The result holds without assumptions on the theory---classical, quantum, or post-quantum---governing the underlying cyclic causal models.
We further show that any causal loop compatible with NSS in a spacetime requires \emph{both} fine-tuning in the causal model \emph{and} in the spacetime embedding: an absence of (superluminal) signalling can turn into (superluminal) signalling under slight perturbations of either the probability distribution or the spacetime locations. This extends the correspondence between information-theoretic and spatio-temporal causal structures beyond the connections established in~\cite{Grothus2024}.

There are two immediate directions for further extending our no-go result. First, when the underlying causal model and system cardinalities are unknown, the causal inference rules used here can be regarded as ``complete'' in the sense of~\cite{Master}; but it is not known whether this remains true once such properties are known or fixed. Could stronger inference rules from affects relations be derived when restricting, e.g., to binary random variables, such that additional sets of affects relations would certify causal loops? Second, the affects framework constrains cyclic causal structures via the $d$-separation property. While $d$-separation holds in many cyclic scenarios, there do exist cyclic models that violate it~\cite{Neal2000}, motivating alternative graph-separation properties such as $\sigma$-separation~\cite{arxiv.1710.08775} and $p$-separation~\cite{Ferradini2025classical}, the latter applying to classical and quantum cyclic models. Whether our no-go result extends to cyclic causal models satisfying $\sigma$- or $p$-separation---and whether spacetime-embeddable causal loops could arise from cyclic models linked to valid process matrices ~\cite{Barrett2021,Ferradini2025quantum}---are natural future directions.

More broadly, although our analysis fixed a background spacetime, the underlying frameworks are more versatile: cyclic information-theoretic causal models are formulated independently of spacetime, and spacetime itself is modelled as a partially ordered set (poset), without assuming manifold structure. Since posets can equivalently be represented as directed acyclic graphs (DAGs), and both of these single out a global direction,  our results can thus be viewed as identifying properties of DAGs compatible with a given possibly cyclic information-theoretic causal structure, with no superluminal signaling and no superluminal causation imposing distinct compatibility conditions. This would allow to study how acyclic causal order and a global direction of time may emerge from a general operational causal structure. This makes the approach relevant for studying realisable protocols in a background spacetime, and also the \emph{emergence} of acyclic causal order and time direction without presupposing a spacetime. The latter was recently explored in a related framework for cyclic quantum causal modelling~\cite{Ferradini2025quantum} (where $p$-separation was introduced) and its bridge to (holographic) tensor networks~\cite{ferradini2026emergent}, where fine-tuning and relativistic principles also play a key role in identifying an emergent time direction from operational primitives.

\section*{Acknowledgements}
MG acknowledges financial support by l’Agence Nationale de la Recherche (ANR), project ANR-22-CE47-0012.
VV acknowledges support from a government grant managed by the Agence Nationale de la Recherche under the Plan France 2030 with the reference ANR-22-PETQ-0007.
For the purpose of open access, the authors have applied a CC-BY public copyright licence to any Author Accepted Manuscript (AAM) version arising from this submission.

\bibliography{references_thesis.bib}

\newpage
\appendix
\begin{center}
 {\bf APPENDIX}   
\end{center}

\section{Proof of main theorem}
\label{sec:proof}

To state the main theorem in its most general form, we formally give the most general form of an affects relation, which allows conditioning on an additional RV $W$. Here, $\cG_{\doo(X)}$ is obtained from $\cG$ by removing all the incoming edges of $X$, while $P_{\cG_{\doo(X)}}$ denotes a corresponding post-intervention distribution.
Together, $\cG_{\doo(X)}$ and $P_{\cG_{\doo(X)}}$ (satisfying the same linking property of $d$-separation) define a post-intervention causal model for interventions on $X$.
We note that in bottom-up causal modelling approaches that specify causal mechanisms (such as states and channels of a theory) on a graph, a do-intervention on any $X$ in a given pre-intervention causal model 
fully specifies the post-intervention model and its distribution $P_{\cG_{\doo(X)}}$: the post-intervention model is obtained by replacing the causal mechanism for $X$ with $X = x$ while keeping other mechanisms unchanged i.e., forcing $X$ to take value $x$ independently of its parents.
Further details on do-interventions in the affects framework can be found in \cite{VVCJ}. 

\begin{definition}[Conditional (higher-order) affects relations~\cite{VVC}]
    \label{def:affects-cond}
    Consider a causal model over a set $S$ of observed nodes, associated with a causal structure $\mathcal{G}$.
    For pairwise disjoint subsets $X, Y, Z, W \subset S$, with $X, Y$ non-empty, we say
    $
        X \, \text{affects} \; Y \, \text{given} \, \doo(Z), W \, ,
    $
    which we concisely denote as
    $
        X \vDash Y \,|\, \doo(Z), W \, ,
    $
    if there exist values $x$ of $X$, $z$ of $Z$ and $w$ of $W$  such that
    \begin{equation}
        \begin{split}
        &\;P_{\mathcal{G}_{\doo(XZ)}} (Y | X=x, Z=z, W=w) \\ \neq
        &\;P_{\mathcal{G}_{\doo(Z)}} (Y | Z=z, W=w)\,.
        \end{split}
    \end{equation}
    For $W \neq \emptyset$, we speak of a \emph{conditional affects relation}, denoted by $X \affects Y \given W$.
    If $Z \neq \emptyset$, we have a \emph{higher-order (HO) affects relation}, denoted by $X \affects Y \given \doo(Z)$.
    The trivial case of $W$ or $Z = \emptyset$, we have an \emph{unconditional} or  \emph{0$^\text{th}$-order} affects relation, respectively.
\end{definition}

Further, we generalise from conical spacetimes to conical embeddings in arbitrary spacetimes, which are defined as follows:

\begin{definition}[Conical Embedding~\cite{Grothus2024}]
    \label{def:conical-embedding}
    An embedding of a set $\mathcal{S}$ of ORVs in a spacetime $\TT$ is called \emph{conical} if for any two finite subsets $L_i, L_j \subseteq O(\SC) = \{O(X) | X \in S \} \subseteq \TT$,
    $
        f(L_i) = f(L_j)
        \ \implies \ 
        \spann(L_i) = \spann(L_j) 
    $
    holds.
\end{definition}

Intuitively, with this definition, we consider the condition of conicality only for the futures of all points $O(\SC)$ in the image of the embedding and the intersections of these futures.
Hence, any embedding into a conical poset $\TT$ is trivially conical. However, the conicality of $\TT$ is not necessary to obtain a conical embedding:
Actually, the vast majority of embeddings into ($1+1$)-dim.\ Minkowski spacetime are conical as well.

\begin{remark}
    \label{rem:fine-tuned-embedding}
    Specifically, such embeddings can be non-conical only if one ORV is embedded into the future light cone surface of another,
    constituting a kind of fine-tuning of the embedding.
    As a possible sufficient criterion for being located \enquote{on the light cone surface} we suggest, in purely order-theoretical terms:
    For $a, b \in \TT$, $b$ is embedded into the future light cone surface of $a$ iff $M = \bar{J}^+ (a) \cap \bar{J}^- (b) = \{ x \in \TT | a \preceq x \preceq b \}$ is totally ordered by $\prec$, i.e.\ $\not\exists \, b, c \in M : \ b \unord c$.
    Here, $M$ is a degenerate example of a causal diamond~\cite{Hounnonkpe2019, Witten2020}.
    For Minkowski spacetime, this criterion is also necessary for being located on the light cone surface, while singularities or other causal irregularities could lead to situations where this condition is not necessary for being on the light cone surface.
\end{remark}

Having this concept established, we return to the proof of the theorem by establishing three auxiliary lemmas, the first of which precisely captures how conicality impacts the structure of the embedding to enable our proof. 

\begin{lemma}
    \label{thm:span-join}
    Let $\TT$ be a poset and $\SC$ a set of ORVs embedded non-degenerately and conically therein.
    Let $\YY_i \subset \SC$ and $\XX_i \in \SC \setminus \YY_i$ for all $i$, and $\XX = \spann(\XX) = \{ \XX_i \}_i, \YY = \bigcup_i \YY_i$.
    Then,
    $
        \Fut_s (\YY_i) \subsetneq \Fut (\XX_i) \ \forall i
        \ \implies \
        \Fut_s (\YY) \subsetneq \Fut_s (\XX) \, .
    $
\end{lemma}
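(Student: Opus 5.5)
The plan is to prove the non-strict inclusion directly, and then obtain strictness by contradiction using conicality twice. I read $f(L)=\bigcap_{x\in L}\bar J(x)$ as the joint future of $L$, and $\spann(L)\subseteq L$ as the set of points of $L$ that contribute non-trivially to $f(L)$. The hypothesis $\XX=\spann(\XX)$ then says that no $\XX_i$ is redundant. Non-degeneracy lets me identify ORVs with their locations throughout, so that $O$ is injective on $\SC$. I assume $\XX$ is nonempty, since otherwise the statement is vacuous or ill-posed.

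\textbf{Step 1 (inclusion).} For every $i$ we have $\YY_i\subseteq\YY$. Hence $\Fut_s(\YY)\subseteq\Fut_s(\YY_i)\subseteq\Fut(\XX_i)$. Intersecting over $i$ gives $\Fut_s(\YY)\subseteq\Fut_s(\XX)$.

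\textbf{Step 2 (equality forces $\XX\subseteq\YY$).} Suppose, for contradiction, that $\Fut_s(\YY)=\Fut_s(\XX)$, i.e.\ $f(O(\YY))=f(O(\XX))$. Both $O(\YY)$ and $O(\XX)$ are finite subsets of $O(\SC)$, so conicality of the embedding (\cref{def:conical-embedding}) gives $\spann(\YY)=\spann(\XX)=\XX$. Since the span of a set is contained in that set, we get $\XX\subseteq\YY$. In particular each $\XX_i$ lies in $\YY$, and in fact in some $\YY_k$ with $k\neq i$, because $\XX_i\notin\YY_i$.

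\textbf{Step 3 (a spanning element turns out to be redundant).} Fix any $i$ and set $\YY'=\YY\setminus\{\XX_i\}$. Since $\XX_i\notin\YY_i$, we still have $\YY_i\subseteq\YY'$. Therefore
\[
\Fut_s(\YY')\subseteq\Fut_s(\YY_i)\subseteq\Fut(\XX_i),
\]
so adding $\XX_i$ back does not shrink the joint future, and $\Fut_s(\YY')=\Fut_s(\YY)$. Applying conicality again gives $\spann(\YY')=\spann(\YY)=\XX\ni\XX_i$. But $\spann(\YY')\subseteq\YY'$, and $\XX_i\notin\YY'$, which is a contradiction. Hence the inclusion from Step 1 is strict.

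Step 3 uses only the non-strict hypothesis $\Fut_s(\YY_i)\subseteq\Fut(\XX_i)$. Strictness of the conclusion comes instead from $\XX_i\notin\YY_i$ together with $\XX=\spann(\XX)$.

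\textbf{Main obstacle.} The delicate point is the precise meaning of $\spann$ from~\cite{Grothus2024}. The argument needs two facts:
\begin{itemize}
\item $\spann(L)\subseteq L$;
\item conicality, in the form $f(L_i)=f(L_j)\implies\spann(L_i)=\spann(L_j)$, applies to arbitrary finite subsets of $O(\SC)$, including $\YY'$.
\end{itemize}
If instead $\spann(L)$ were defined as the set of all points whose future contains $f(L)$, I would replace Step 3 with a minimality argument: pick a minimal generating subset of $\YY$, and compare it with $\XX$ via the uniqueness that conicality guarantees.
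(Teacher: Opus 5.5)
Your proof is correct and follows the paper's own route. You first get the non-strict inclusion by intersecting over $i$, then reach a contradiction because conicality would force $\spann(\YY)=\XX$ while each $\XX_i$ is redundant in $\YY$ (since $\YY_i\subseteq\YY\setminus\{\XX_i\}$ and $\Fut_s(\YY_i)\subseteq\Fut(\XX_i)$); your Step 3, which applies conicality a second time to $\YY'=\YY\setminus\{\XX_i\}$, simply spells out the passage the paper compresses into ``this implies $\XX_i\notin\spann(\YY)$'' after showing $\XX_i\notin\spann(\XX_i\YY_i)=\spann(\YY_i)$.
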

\begin{proof}
    By conjunction, we arrive at
    $\Fut_s (\YY_i) \subsetneq \Fut (\XX_i) \,\forall i$\\
    $
        \, \implies \,
        \Fut_s (\YY) \subsetneq \Fut (\XX_i) \, \forall i
        \, \implies \,
        \Fut_s (\YY) \subseteq \Fut_s (\XX) .
    $

    Due to conicality of the embedding, $\Fut_s (\YY) \overset{!}{=} \Fut_s (\XX)$ would imply that the set of locations of $\spann(\YY)$ is equal to the set of locations of $\spann(\XX)$.
    As the embedding is non-degenerate, this would imply the RVs themselves to be the same, i.e.\ $\spann(\XX) = \spann(\YY)$, and moreover, $\XX = \spann(\YY)$.
    (Note that $\XX_i$ and $\YY$ are not necessarily disjoint.)
    However, as by assumption $\XX_i \in \SC \setminus \YY_i$, we find that $O(\XX_i) \not\in O(\YY_i)$ due to non-degeneracy.
    Due to conicality, $\Fut_s (\YY_i) \subsetneq \Fut (\XX_i)$ then also affirms $\XX_i \not\in \spann(\XX_i \YY_i) = \spann(\YY_i)$. 
    This implies $\XX_i \not\in \spann(\YY)$, which poses a contradiction. $\lightning$
\end{proof}

\begin{lemma}
    \label{thm:deduce-affects}
    For any causal model with observed RVs $S$, $X, Y, Z \subset S$ disjoint and $e_X \in X, e_Z \in Z$,\\
    $e_X \naffects Y \given \doo(Z X \setminus e_Z e_X)
    \ \land \
    e_X \affects Y \given \doo(ZX \setminus e_X)$\\
    implies\\
    $e_Z \affects Y \given \doo(Z X \setminus e_Z)
    \ \lor \
    e_Z \affects Y \given \doo(Z X \setminus e_Z e_X)$.
\end{lemma}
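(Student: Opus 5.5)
The plan is a direct contrapositive argument: chain equalities of post-intervention distributions until they contradict the single inequality assumed in the hypothesis. Throughout, write $W \coloneqq ZX \setminus e_Z e_X$ for the set of RVs intervened upon in every term. For values $w$ of $W$, $x$ of $e_X$ and $z$ of $e_Z$, I would introduce the four conditional distributions on $Y$:
\begin{equation*}
\begin{split}
P_1 &\coloneqq P_{\cG_{\doo(W e_X e_Z)}}(Y \given W=w, e_X=x, e_Z=z),\\
P_2 &\coloneqq P_{\cG_{\doo(W e_Z)}}(Y \given W=w, e_Z=z),\\
P_3 &\coloneqq P_{\cG_{\doo(W e_X)}}(Y \given W=w, e_X=x),\\
P_4 &\coloneqq P_{\cG_{\doo(W)}}(Y \given W=w).
\end{split}
\end{equation*}
Unfolding \cref{def:affects-cond} with the trivial conditioning set, each affects relation in the statement becomes a statement about two of these four distributions:
\begin{itemize}
\item $e_X \naffects Y \given \doo(W)$ says $P_3 = P_4$ for all $w, x$.
\item $e_X \affects Y \given \doo(W e_Z)$ says $P_1 \neq P_2$ for some $w, x, z$.
\item $e_Z \affects Y \given \doo(W e_X)$ says $P_1 \neq P_3$ for some $w, x, z$.
\item $e_Z \affects Y \given \doo(W)$ says $P_2 \neq P_4$ for some $w, z$.
\end{itemize}

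Next I would argue by contradiction. Suppose both disjuncts of the conclusion fail, so that $P_1 = P_3$ and $P_2 = P_4$ for all values. Combined with the first hypothesis, this gives, for every $w, x, z$,
\[
P_1 = P_3 = P_4 = P_2 .
\]
This contradicts the existence of values with $P_1 \neq P_2$ guaranteed by the second hypothesis. The argument uses neither the disjointness of $Y$ from the intervened sets nor any graphical property beyond the well-definedness of the post-intervention models.

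The only point I expect to need care is bookkeeping. I must check that $ZX\setminus e_X = W e_Z$ and $ZX \setminus e_Z = W e_X$, so that the four relations really refer to exactly these four models. I must also check that a joint intervention on a set determines a unique post-intervention model $\cG_{\doo(\cdot)}$ with distribution $P_{\cG_{\doo(\cdot)}}$, independent of how the set is written. Both hold by the definition of interventions on sets in \cref{def:affects-cond}, so no real obstacle is expected. The content of the lemma is essentially this triangle of equalities.
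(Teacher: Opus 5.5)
Your proof is correct and follows the paper's argument: the paper also expands the four relations into the same four post-intervention distributions, and notes that equality along both remaining sides of this ``square'' would contradict the assumed inequality. Your explicit handling of the quantifiers (``for all values'' for the negated relations, ``some values'' for the affirmed one) is cleaner than the paper's displayed expansion, where the intervention sets and conditioning sets appear to be swapped by a typo.
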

\begin{proof}
    By definition, the relations expand to
    \begin{align*}
        P_{\cG_{\doo(ZX \backslash e_Z e_X)}} (Y | ZX \setminus e_Z)
        &=
        P_{\cG_{\doo(ZX \backslash e_Z)}} (Y | ZX \backslash e_Z e_X) \\
        P_{\cG_{\doo(ZX \backslash e_X)}} (Y | ZX)
        &\neq
        P_{\cG_{\doo(ZX)}} (Y | ZX \backslash e_X)\,.
    \end{align*}
    For these equations to be compatible with one another, it is required that either $P_{\cG_{\doo(ZX \backslash e_Z e_X)}} (Y | ZX \setminus e_Z) \neq P_{\cG_{\doo(ZX \backslash e_X)}} (Y | ZX)$ or $P_{\cG_{\doo(ZX \backslash e_Z)}} (Y | ZX \setminus e_Z e_X) \neq P_{\cG_{\doo(ZX)}} (Y | ZX \setminus e_X)$, exactly corresponding to $e_Z \affects Y \given \doo(Z X \setminus e_Z)$ or $e_Z \affects Y \given \doo(Z X \setminus e_Z e_X)$.
\end{proof}
This proof generalises to subsets $s_Z \subset Z$ and $s_X \subset X$. However, this generalisation will not be required in this work.
With this at hand, we continue by stating the formal version of \textbf{Step 1} (described in \cref{sec: main}), slightly generalised to cover \emph{conditional} higher-order affects relations $X \affects Y \given \doo(Z), W$ (cf.\ \cref{def:affects-cond}).

\begin{lemma}
    \label{thm:location-symmetry}
    Consider a non-degenerate conical embedding into a poset $\TT$.
    Let $\YY \subset \SC$, $\XX = \{ e_\XX^i \}_i \subset \SC$ with
    $\Fut(e_\XX^i) \supseteq \Fut_s(\YY \XX \setminus e_\XX^i) \ \forall\, i$.
    Then $\Fut(e_\XX^i) \supseteq \Fut_s(\YY) \ \forall\, i\,$.
\end{lemma}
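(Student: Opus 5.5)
The plan is to show that all the joint futures in the hypothesis coincide with $\Fut_s(\YY\XX)$, and then use conicality to conclude that no $e_\XX^i$ contributes non-trivially to that joint future. I use $\spann(L)$ as in \cite{Grothus2024}: the subset of $L$ that contributes non-trivially to the joint future of $L$. Concretely, I assume $\spann(L)\subseteq L$ and $f(\spann(L)) = f(L)$, where $f(L)=\bigcap_{x\in L}\bar J(x)$. I assume throughout that $\XX$ and $\YY$ are finite, so that \cref{def:conical-embedding} applies.

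\textbf{Step 1: all the relevant joint futures coincide.} Fix $i$. Since the joint future is an intersection, adding $e_\XX^i$ to $\YY\XX\setminus e_\XX^i$ gives
\[
\Fut_s(\YY\XX) = \Fut_s(\YY\XX\setminus e_\XX^i)\cap \Fut(e_\XX^i).
\]
By hypothesis, $\Fut_s(\YY\XX\setminus e_\XX^i)\subseteq\Fut(e_\XX^i)$, so this intersection is just $\Fut_s(\YY\XX\setminus e_\XX^i)$. Hence $\Fut_s(\YY\XX\setminus e_\XX^i)=\Fut_s(\YY\XX)$ for every $i$.

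\textbf{Step 2: apply conicality.} By Step 1, the finite location sets $O(\YY\XX\setminus e_\XX^i)$ and $O(\YY\XX)$ have the same joint future. Conicality of the embedding then gives equal spans:
\[
\spann(\YY\XX\setminus e_\XX^i)=\spann(\YY\XX).
\]
Since $\spann(L)\subseteq L$, the left-hand side does not contain $O(e_\XX^i)$. Non-degeneracy ensures that $O(e_\XX^i)$ is distinct from every other location in $\YY\XX$, so $O(e_\XX^i)\notin\spann(\YY\XX)$. This holds for every $i$, so $\spann(\YY\XX)\subseteq O(\YY)$.

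\textbf{Step 3: conclude.} Using $f(\spann(L))=f(L)$, and the fact that intersecting over fewer points gives a larger set,
\[
\Fut_s(\YY)\subseteq f(\spann(\YY\XX)) = \Fut_s(\YY\XX)\subseteq \Fut_s(\YY).
\]
Therefore $\Fut_s(\YY)=\Fut_s(\YY\XX)\subseteq\Fut(e_\XX^i)$ for all $i$, which is the claim. The degenerate case $\YY=\emptyset$ goes through with the convention $\Fut_s(\emptyset)=\TT$: then $\spann(\XX)=\emptyset$ and $\Fut_s(\XX)=\TT$.

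\textbf{Main obstacle.} The key step is Step 2. It relies on the precise properties of $\spann$ from \cite{Grothus2024}, namely $\spann(L)\subseteq L$ and that $\spann(L)$ reproduces the joint future of $L$, combined with non-degeneracy to transfer statements about locations to statements about ORVs. If $\spann$ is instead defined as a closure, containing all points whose future contains $f(L)$, the argument must be rephrased. In that case I would work with the minimal generating subset of $L$. Its uniqueness is exactly what conicality provides, and the same removal argument then shows that this subset avoids every $O(e_\XX^i)$.
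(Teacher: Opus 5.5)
Your proof is correct and follows the same route as the paper. The hypothesis forces every $\Fut_s(\YY\XX\setminus e^i_\XX)$ to equal $\Fut_s(\YY\XX)$ (the paper obtains this equality by citing Lemma~G.7 of \cite{Grothus2024}, you obtain it in one line), and conicality plus non-degeneracy then remove every $e^i_\XX$ from the span, giving $\Fut_s(\YY)\subseteq\Fut_s(\YY\XX)$.

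One caveat on your assumptions: like the paper, you use $\spann(L)\subseteq L$ and $f(\spann(L))=f(L)$. If $\spann(L)$ is defined as the set of points whose removal changes $f(L)$, then the second property is where conicality actually enters, and your conicality step in Step~2 becomes automatic. The property is recovered by deleting non-contributing points one at a time: each deletion preserves $f$, hence by conicality preserves the span, so the remaining non-contributing points stay removable.
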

\begin{proof}
    This follows by applying an argument perfectly analogous to the proof of Lemma~G.9 of~\cite{Grothus2024}:
    After using Lemma~G.7 of that paper (with  $\XX$ and $\YY$ taking the role of $\mathcal{B}$ and $\mathcal{A}$ there), we obtain
    $\Fut(\YY \XX \setminus e_\XX^i) = \Fut_s(\YY\XX \setminus e^j_\XX) \ \forall\, i,j$ and apply the implication given in Definition~G.4 
    to obtain the statement of the lemma.
\end{proof}    

We now continue by stating the main lemma corresponding to step 1 of the procedure outlined in the main text.

\begin{lemma}
    \label{thm:deorder}
    Consider a causal model featuring a (conditional) higher-order affects relation $X \affects Y \given \doo(Z), W$ satisfying Irred$_1$, which w.l.o.g.\ reduces to
    $X \affects Y \given \doo(s_Z), W$ being Irred$_3$ for some $s_Z \subseteq Z$, potentially empty.
    Then
    \begin{equation}
        \label{eq:deorder-caus}
        \forall e_{XZ} \in X s_Z : \ e_{XZ} \cause YW \,.
    \end{equation}
    
    For a conical embedding, we additionally find
    \begin{equation}
        \label{eq:deorder-comp}
        \Fut_s(\XX) \cap \Fut_s(s_\ZZ) \supseteq \Fut_s (\YY) \cap \Fut_s(\WW)\,.
    \end{equation}
    Altogether, the causal inference and compatibility conditions for the combination of $X \affects Y \given \doo(Z), W$ Irred$_1$ and $X \affects Y \given \doo(s_Z), W$ Irred$_3$ are the same as those of $X s_Z \affects YW$ with Irred$_1$.
\end{lemma}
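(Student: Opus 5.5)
The proof has three parts: the existence of $s_Z$, the causal inference \eqref{eq:deorder-caus}, and the compatibility condition \eqref{eq:deorder-comp}.

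\textbf{Existence of $s_Z$.} Start from $X \affects Y \given \doo(Z), W$, which is Irred$_1$. If it is not Irred$_3$, some $e_Z \in Z$ can be dropped from the third argument while the relation still holds. Dropping such elements one at a time terminates, since $Z$ is finite, at a subset $s_Z \subseteq Z$ (possibly empty) for which $X \affects Y \given \doo(s_Z), W$ is Irred$_3$. We must also check that Irred$_1$ survives each removal. For this we use \cref{thm:deduce-affects}, with $W$ carried along in the conditioning. Suppose removing $e_Z$ made some $e_X$ redundant in the first argument, i.e.\ $e_X \naffects Y \given \doo(ZX\setminus e_Ze_X),W$, while Irred$_1$ before the removal gives $e_X \affects Y\given\doo(ZX\setminus e_X),W$. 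The lemma then yields $e_Z \affects Y$ given $\doo(ZX\setminus e_Z)$ or $\doo(ZX\setminus e_Ze_X)$. Either way $e_Z$ is not removable, so it should be kept in $s_Z$. This justifies the phrase ``w.l.o.g.''

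\textbf{Causal inference.} By Irred$_1$, the result of \cite{VVC} gives $e_X \cause YW$ for every $e_X\in X$. The $W$ appears because conditioning on $W$ can open paths through colliders or descendants in $W$, so the inferred path ends at $Y$ or at $W$. By Irred$_3$, the result of \cite{Grothus2024} gives $e_Z \cause YW$ for every $e_Z \in s_Z$. Together these are \eqref{eq:deorder-caus}, which is exactly the inference licensed by $Xs_Z\affects YW$ being Irred$_1$.

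\textbf{Compatibility.} NSS for the Irred$_1$ relation gives directly $\Fut_s(\YY)\cap\Fut_s(\WW)\cap \Fut_s(s_\ZZ)\subseteq \Fut_s(\XX)$; here I take the compatibility condition for conditional relations to include $\WW$ alongside $\YY$, as in \cite{VVC}. This is the hard part: the $s_\ZZ$ factor on the left must be removed, i.e.\ we need $\Fut_s(\YY\WW)\subseteq \Fut(e_\ZZ)$ for each $e_Z \in s_Z$. I would obtain it from \cref{thm:location-symmetry}.

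First use Irred$_3$. For each $e_Z\in s_Z$, dropping $e_Z$ destroys the relation. Combining this with \cref{thm:deduce-affects} in the form used above, either $e_Z$ itself affects $Y$ given $\doo$ of the remaining variables, or some $e_X$ does so only in the presence of $e_Z$. In both cases an Irred$_1$ relation with $e_Z$ in the first argument can be extracted. Its NSS condition gives
$$\Fut(e_\ZZ)\supseteq \Fut_s(\YY\WW\,(\XX s_\ZZ)\setminus e_\ZZ),$$
and the same type of bound holds for each $e_\XX$, again by NSS.

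Now apply \cref{thm:location-symmetry}, with $\YY\WW$ in the role of $\YY$ and $\XX s_\ZZ$ in the role of $\XX$. This gives $\Fut(e)\supseteq\Fut_s(\YY\WW)$ for all $e\in Xs_Z$, which is \eqref{eq:deorder-comp}. This step relies on conicality, through \cref{thm:location-symmetry}.

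I expect the main obstacle to be extracting, from Irred$_3$ alone, a genuinely Irred$_1$ relation with $e_Z$ in the first argument, so that NSS applies to it. If this fails, I would instead define $s_Z$ more carefully, as a minimal subset for which each $e_Z$ satisfies one of the two disjuncts of \cref{thm:deduce-affects}.

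Conversely, \eqref{eq:deorder-comp} together with \eqref{eq:deorder-caus} implies the original NSS constraints. Hence the combination of the two relations has the same implications as $Xs_Z\affects YW$ being Irred$_1$, which is the final claim.
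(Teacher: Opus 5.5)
\textbf{Gap: Irred$_1$ does not survive the reduction.} Your compatibility part rests on the claim, made in your first part, that Irred$_1$ survives each removal, and that claim is false. From $e_X\naffects Y\given\doo(ZX\setminus e_Ze_X),W$ and $e_X\affects Y\given\doo(ZX\setminus e_X),W$, \cref{thm:deduce-affects} gives $e_Z\affects Y\given\doo(ZX\setminus e_Z)$ or $e_Z\affects Y\given\doo(ZX\setminus e_Ze_X)$. Only the first disjunct contradicts Red$_3$-removability of $e_Z$, which concerns $\doo(XZ\setminus e_Z)$ and $\doo(Z\setminus e_Z)$. The second disjunct keeps $X\setminus e_X$ intervened on, so it is compatible with $e_Z$ being removable whenever $|X|\geq 2$.

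A concrete counterexample: take hidden uniform bits $\lambda,\kappa,\mu$ and set $B:=\mu$, $C:=B\oplus 1$, $A:=\lambda\,\delta_{C,B}$, $Y:=(A\oplus\lambda\oplus\kappa\,\delta_{B,\mu},\,B)$. Then $AB\affects Y\given\doo(C)$ is Irred$_1$. Also $C\naffects Y\given\doo(AB)$ and $C\naffects Y$, so the relation Red$_3$-reduces to $AB\affects Y$. Yet $A\naffects Y\given\doo(B)$, so $AB\affects Y$ is not Irred$_1$; consistently, $C\affects Y\given\doo(B)$ holds, which is the second disjunct.

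NSS only constrains Irred$_1$ relations. Hence neither your starting inclusion $\Fut_s(\YY\WW)\cap\Fut_s(s_\ZZ)\subseteq\Fut_s(\XX)$ nor the bounds $\Fut(e_\XX)\supseteq\Fut_s(\YY\WW)\cap\Fut_s(\XX\setminus e_\XX)\cap\Fut_s(s_\ZZ)$ that you feed into \cref{thm:location-symmetry} follow ``by NSS''. The original relation gives them only with the full $\ZZ$ in place of $s_\ZZ$. Relatedly, ``no $e_Z$ can be dropped while the relation still holds'' is a clustering-type condition, not Irred$_3$.

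\textbf{Where the real difficulty lies.} The step you flag as the main obstacle is harmless. Relations with a single element in the first argument are trivially Irred$_1$, and \cref{thm:deduce-affects}, or Irred$_3$ restricted to singletons, supplies one for each $e_Z$. The real difficulty is on the $e_X$ side.

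The paper resolves it by selecting $s_Z$ differently, not by Red$_3$-reduction. It first passes to $X\affects YW\given\doo(Z)$ via Lemma~IV.8 of \cite{VVC}. It then drops an $e_Z$ only if $e_X\affects YW\given\doo(ZX\setminus e_Ze_X)$ still holds for every $e_X$ (case II). There \cite[Lemma~F.2]{Grothus2024}, together with conicality, supplies the compatibility consequences of Irred$_1$. It stops once every remaining $e_Z$ is needed by some $e_X$ (case I). At that point \cref{thm:deduce-affects} yields a single-element relation for each remaining $e_Z$. \cref{thm:location-symmetry} is then applied to $\XX$ together with the remaining $\ZZ$, using the element-wise $e_X$ relations that the selection rule preserves.

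Your fallback points in this direction, but it lacks exactly this invariant on the $e_X$ relations. Note that the subset obtained this way need not make the relation Irred$_3$ in the standard sense: in the example above the paper's procedure keeps $C$. Your causal-inference part, which uses only Irred$_1$ of the original relation and Irred$_3$ of the reduced one, is fine.
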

\begin{proof}
    By Lemma IV.8 of~\cite{VVC}, $X \affects Y \given \doo(Z), W$  being Irred$_1$ implies $X \affects YW \given \doo(Z)$ being Irred$_1$ for each causal model.
    Then, $X \affects YW \given \doo(Z)$ can either be Red$_3$ or not.
    In any case there exists $X \affects YW \given \doo(s_Z)$ for some $s_Z \subseteq Z$, satisfying either Irred$_3$ or having $s_Z = \emptyset$.

    However, it is not necessarily clear that this relation is again irreducible in the first argument, or at least, if the relevant implications for causal inference and compatibility are maintained.
    
    By the original relation being Irred$_1$, we have $e_X \affects YW \given \doo(ZX \setminus e_X) \ \forall e_X$.
    We differentiate two cases:

    \textbf{I)} If $\forall e_Z \in Z \,\ \exists e_X \in X: e_X \naffects YW \given \doo(Z X \setminus e_Z e_X)$, we can use this fact together with Irred$_1$ of the original relation implying $e_X \affects YW \given \doo(ZX \setminus e_X)$ to deduce additional affects relations.
    Following \cref{thm:deduce-affects}, they jointly certify
    $e_Z \affects YW \given \doo(Z X \setminus e_Z) \ \lor \ e_Z \affects YW \given \doo(Z X \setminus e_Z e_X) \quad \forall e_Z \in Z$.\footnote{
        This property is analogous to imposing Irred$_3$ for $X \affects YW \given \doo(Z)$ only with respect to $e_Z \in Z$ rather than all $s_Z \subseteq Z$, leading to a different notion for $\abs{Z} > 2$.
    }
    For causal inference, either of these alternative affects relations implies $e_Z \cause YW \ \forall e_Z \in Z$, affirming \cref{eq:deorder-caus}.
    
    By compatibility, they either at least imply
    $\Fut(e_\ZZ) \supseteq \Fut_s(\ZZ \setminus e_\ZZ) \cap \Fut_s (\XX) \cap \Fut_s (\YY\WW)$,
    which together with $\Fut(e_\XX) \supseteq \Fut_s(\XX \setminus e_\XX) \cap \Fut_s(\ZZ) \cap \Fut_s(\YY\WW)$ (from $e_X \affects YW \given \doo(ZX \setminus e_X)$) can be equivalently written as $\Fut(e_{\ZZ\XX}) \supseteq \Fut_s(\YY\WW\XX\ZZ \setminus e_{\XX\ZZ})$ for all $e_{\ZZ\XX}\in \ZZ\XX$.
    By \cref{thm:location-symmetry}, this implies $\Fut(e_{\ZZ\XX}) \supseteq \Fut_s (\YY\WW) \ \forall e_{\ZZ\XX} \in \ZZ\XX$, recovering \cref{eq:deorder-comp}.
    
    We will repeat the argument for case $\mathbf{I}$ for subsets $s_Z$ in place of $Z$ in later iterations, see below in \textbf{II}.

    \textbf{II)} Otherwise, $\exists e_Z \in Z \;\ \forall e_X \in X: e_X \affects YW \given \doo( Z X \setminus e_Z e_X)$\footnote{
        For $\abs{X} > 2$, this is not quite sufficient for irreducibility of $X\affects YW \given \doo(Z \setminus e_Z)$, which is precisely the analogous statement obtained when replacing $e_X$ with $\forall s_X \subset X$.
        In isolation, it implies $\Fut(e_\XX) \supseteq \Fut_s(\XX \setminus e_\XX) \cap \Fut_s(\ZZ \setminus e_\ZZ) \cap \Fut_s (\YY\WW) \ \forall e_\XX \in \XX$.}.
    By \cite[Lemma~F.2]{Grothus2024}, for a conical embedding, this has the same implications for compatibility (and additionally, for causal inference) as
    $X \affects YW \given \doo(Z \setminus e_Z)$ being Irred$_1$. If $Z \setminus e_Z = \emptyset$, we arrive at the claim for $s_Z = \emptyset$.
    If $Z \setminus e_Z \neq \emptyset$, we repeat the same argument for the relations $e_X \affects YW \given \doo(ZX \setminus e_Z e_X) \ \forall e_X \in X$ in place of $e_X \affects YW \given \doo(ZX \setminus e_X) \ \forall e_X \in X$, again considering for this set of relations whether case $\textbf{I}$ or $\textbf{II}$ applies with $Z \setminus e_Z$ in place of $Z$.
    Repeating recursively, we ultimately either end up in case $\textbf{I}$ to recover the claim or, after $\abs{Z}$ repetitions, reduce to $s_Z = \emptyset$ and recover it for that case.
\end{proof}
Essentially, while this proof does not affirm $X \affects YW \given \doo(s_Z)$ to be Irred$_1$, it recovers the associated implications of irreducibility for causal inference and compatibility.
However, as raised in the footnotes, case \textbf{I} in the proof corresponds to the case where the relation is irreducible in the third argument in a weaker sense only, while case \textbf{II} represents the case where the relation is reducible in the first argument in this weaker sense. We discuss these properties in \cref{sec:irreducible-weak}.

This result generalises from a single affects relations to entire set of affects relations, which may ultimately give rise to an affects causal loop.

\begin{corollary}
    \label{thm:deorder-loop}
    Consider a set of higher-order affects relations $\mathscr{A}$.
    Then, \cref{thm:deorder} yields a set of 0$^\text{th}$-order affects relations $\mathscr{A}'$ which imply an ACL if and only if $\mathscr{A}$ implies an ACL.
    Furthermore, under the conditions of that lemma, both $\mathscr{A}$ and $\mathscr{A}'$ yield the same constraints for compatible conical embeddings into a spacetime $\TT$.
\end{corollary}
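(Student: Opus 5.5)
The plan is to build $\mathscr{A}'$ relation by relation and compare the two sets at the level of the two things that matter: the causal inferences they license and the compatibility constraints they impose. For every Irred$_1$ relation $X \affects Y \given \doo(Z), W$ in $\mathscr{A}$, \cref{thm:deorder} gives a subset $s_Z \subseteq Z$. We replace the relation by the $0^\text{th}$-order relation $X s_Z \affects YW$, which we declare Irred$_1$. Any relation already of $0^\text{th}$ order with $W=\emptyset$ is kept unchanged. Relations that are not Irred$_1$ impose no compatibility constraint. They are handled by first passing to an Irred$_1$ sub-relation $s_X \affects Y \given \doo(Z), W$ with $s_X \subseteq X$, which exists by the standard reduction of \cite{VVC}. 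Such a sub-relation licenses the same inference that some element of $X$ causes $Y$.

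Next we show that $\mathscr{A}$ and $\mathscr{A}'$ license the same causal inferences. An ACL is implied by a set of relations precisely when every way of selecting causal arrows consistent with the inferred statements ``$e \cause Y$'' produces a directed cycle. So it suffices to show that each relation of $\mathscr{A}$ and its replacement in $\mathscr{A}'$ yield the same family of admissible arrow-choices. \cref{eq:deorder-caus} gives $e \cause YW$ for every $e \in X s_Z$, which is exactly the content of Irred$_1$ for $X s_Z \affects YW$. Conversely, the inferences from the original relation, namely Irred$_1$ in $X$ and Irred$_3$ in $s_Z$, are exactly these statements. The last sentence of \cref{thm:deorder} asserts this identification directly.

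The remaining step is to compare the compatibility constraints, and I expect it to be the main obstacle. For $\mathscr{A}'$ the constraint is $\Fut_s(\YY\WW) \subseteq \Fut_s(\XX) \cap \Fut_s(s_\ZZ)$, which is \cref{eq:deorder-comp}. I would treat the two directions separately.

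First, suppose the embedding is compatible with $\mathscr{A}$. Then \cref{thm:deorder}, applied under its conicality hypothesis, shows that the embedding satisfies the constraints of $\mathscr{A}'$.

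Second, suppose the embedding is compatible with $\mathscr{A}'$. We must recover the original NSS constraint for $X \affects Y \given \doo(Z), W$, which should read $\Fut_s(\YY)\cap\Fut_s(\ZZ) \subseteq \Fut_s(\XX)$, possibly intersected with the $\WW$-future in the conditional case. The intended argument is that $\Fut_s(\YY\WW)\cap\Fut_s(\ZZ) \subseteq \Fut_s(\YY\WW) \subseteq \Fut_s(\XX)$, so the $\mathscr{A}'$ constraint is at least as strong. Here the difficulty is that compatibility for $\mathscr{A}$ also involves the auxiliary Irred$_1$ and Irred$_3$ relations generated inside the proof of \cref{thm:deorder}. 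I would check that each of these auxiliary constraints has the form $\Fut(e) \supseteq \Fut_s(\cdots)\cap\Fut_s(\YY\WW)$. Each such constraint is then implied by $\Fut(e)\supseteq\Fut_s(\YY\WW)$, which holds for all $e\in X s_Z$ by the $\mathscr{A}'$ constraint. The elements of $Z\setminus s_Z$ appear only on the left of intersections, so they cause no trouble.

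One point needs care here. This direction may only establish that $\mathscr{A}'$ is the stronger set of constraints. The statement requires equality ``under the conditions of that lemma''. I would therefore rely on the recursion in \cref{thm:deorder} together with \cref{thm:location-symmetry}, which show that within a conical embedding the original constraints already force \cref{eq:deorder-comp}. This gives the equivalence in both directions.
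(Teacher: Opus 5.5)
Your compatibility argument is fine and matches what the paper intends: it reads \cref{eq:deorder-comp} off \cref{thm:deorder}. Your check that each auxiliary constraint has the form $\Fut(e)\supseteq\Fut_s(\cdots)\cap\Fut_s(\YY\WW)$ with $e\in Xs_Z$ is a reasonable elaboration, and your last paragraph only restates your first direction.

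The gap is in the causal-inference half. You take as given that a set of relations implies an ACL exactly when every choice of arrows consistent with the rule-derived statements ``$e\cause YW$'' is cyclic. You then only check that $\mathscr{A}$ and $\mathscr{A}'$ give the same rule-derived statements. But ``$\mathscr{A}$ implies an ACL'' means that every causal model giving rise to $\mathscr{A}$ contains a cycle, and your characterisation is precisely what needs proof. Soundness of the rules gives only one direction. The other direction is a \emph{completeness} statement: for every selection of arrows consistent with the rules, and in particular every acyclic one, there must be a causal model that realises all relations of $\mathscr{A}$, with their Irred$_1$/Irred$_3$ properties, whose causal influences are exactly that selection. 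Without this, the conjunction of several higher-order relations might license further causal conclusions. The same could happen through the variables in $Z\setminus s_Z$ that you discard when passing to $\mathscr{A}'$. Such extra conclusions could force a loop for $\mathscr{A}$ that $\mathscr{A}'$ does not see. Then ``$\mathscr{A}$ implies an ACL $\Rightarrow$ $\mathscr{A}'$ implies an ACL'' does not follow, and the converse needs the same fact for the $0^\text{th}$-order set $\mathscr{A}'$.

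This completeness step is where the paper's proof puts all its effort. It notes that a single relation $X\affects Y$ is consistent with a causal structure containing just one arrow $e_X\cause e_Y$, and that the analogous statement holds for relations irreducible in the first or third argument. For conjunctions of relations it invokes \cite[Lemma~3.12]{Master}, which constructs a causal model with exactly the inferred causal influences. It does so by fine-graining RVs of sufficient cardinality into independent Cartesian factors, each affects relation acting only on its own factors. You need to supply or cite this argument before reducing ACL-implication to arrow choices.
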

\begin{proof}
    To demonstrate this, one needs to rule out the existence of further causal inference rules to the ones reviewed above, which could contribute to an ACL.
    For an individual affects relation $X \affects Y$, it is consistent with a causal model where we have just a single causal arrow $e_X \cause e_Y$ for some $e_X \in X,\ e_Y \in Y$.
    Hence, when considering individual relations only, the causal inference rule reviewed in the main text is generally the best one can do.
    When applying this argument to an affects relation irreducible in the first or third argument, which ultimately represents a set of affects relations, an analogous argument applies to recover the causal inference rules reviewed above.
    
    The conjunction of multiple affects relations --~or beyond that, information on the cardinality of RVs, their probability distribution or other constraints on the causal structure~-- could however in some cases allow for additional causal inference.
    However,~\cite[Lemma 3.12]{Master} provides demonstrates that in absence of such restrictions, one can generally give a causal model featuring exactly the causal influences implied by the inference rules reviewed above.    \footnote{
    More specifically, RVs of sufficient cardinality may be fine-grained into independent (Cartesian) factors, with each affects relation only interacting with a subset of these factors.~\cite{VVR, Grothus2024}
}
\end{proof}

\begin{theorem}
	\label{thm:conical-embedding-no-loops}
    Any conical compatible embedding of ACLs into spacetime is degenerate.
\end{theorem}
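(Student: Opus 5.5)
Our approach is by contradiction: we assume a compatible, conical and non-degenerate embedding of a set $\mathscr{A}$ of affects relations that certifies an ACL. By \cref{thm:deorder-loop} we may replace $\mathscr{A}$ by the associated set $\mathscr{A}'$ of $0^\text{th}$-order relations $X s_Z \affects YW$. The replacement keeps both properties we need: $\mathscr{A}'$ still implies an ACL, and it imposes the same compatibility constraints $\Fut_s(\YY\WW) \subseteq \Fut_s(\XX s_\ZZ)$, with the inference rule that every element of the first argument causes some element of the second. We therefore work only with $0^\text{th}$-order, Irred$_1$-type relations $X_k \affects Y_k$, and each $e_X \in X_k$ satisfies $\Fut_s(\YY_k) \subseteq \Fut(e_\XX)$.

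The first step is to extract an explicit cycle from the ACL, in a form that survives the non-determinism of the inference rule. An ACL means that for every choice function $c$, assigning to each relation and each $e_X \in X_k$ an element $c(e_X) \in Y_k$, the resulting graph is cyclic. We choose the assignment adversarially so that it follows the embedding. For each $e_X \in X_k$ we note $\Fut(e_\XX) \supseteq \Fut_s(\YY_k)$. From this we want a strict chain $\Fut(e_\XX) \supsetneq \Fut_s(\YY_k)$, obtained from conicality plus non-degeneracy as in the proof of \cref{thm:span-join}. If equality held, conicality would force $\spann(\{e_\XX\}) = \spann(\YY_k)$, i.e.\ $\{e_\XX\} = \spann(\YY_k)$. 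But $e_X \notin Y_k$ by disjointness, and non-degeneracy makes $O(e_X) \notin O(Y_k)$, which gives the contradiction.

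The second step is to propagate along the cycle. As in the worked example, we gather the relations whose second arguments meet the cycle into sets and combine their strict inclusions using \cref{thm:span-join}. Each relation gives $\Fut_s(\YY_k) \subsetneq \Fut(e_\XX)$, so \cref{thm:span-join} yields $\Fut_s(\bigcup \YY_k) \subsetneq \Fut_s(\XX)$ for collections of sources. Set $C_0$ to be the set of sources of the loop and $C_{n+1}$ the union of the second arguments of relations whose first argument meets $C_n$. We then obtain a chain $\Fut_s(\mathcal{C}_0) \supsetneq \Fut_s(\mathcal{C}_1) \supsetneq \cdots$ over finitely many subsets of the finite set $\SC$. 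The ACL property guarantees that this closure process returns to a previously visited set. The reason is that otherwise we could define a choice function along the layers that is acyclic, with every chosen target lying strictly "later" in the future-inclusion order. Returning to a visited set gives $\Fut_s(\mathcal{C}) \supsetneq \Fut_s(\mathcal{C})$, which is a contradiction, as in $\Fut_s(\AA\BB)\neq\Fut_s(\AA\BB)$ of the example.

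The main obstacle is the combinatorial core of the second step: showing that "every choice yields a cycle" really produces a closed family of sets to which \cref{thm:span-join} applies with the disjointness hypothesis $\XX_i \notin \YY_i$ intact. Concretely, we must find a nonempty set $\mathcal{C}\subseteq \SC$ such that every $e \in \mathcal{C}$ lies in the first argument of some relation whose second argument is contained in $\mathcal{C}$. Then \cref{thm:span-join} gives $\Fut_s(\mathcal{C}) \subsetneq \Fut_s(\mathcal{C})$ directly. Our first attempt is the contrapositive. Suppose no such $\mathcal{C}$ exists, and repeatedly remove from $S$ the nodes all of whose relations have second arguments not contained in the current set. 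We then build a choice function that sends each removed node to a node removed earlier, or to a node outside the current set. This yields an acyclic causal structure compatible with the inference rules, which contradicts the ACL.
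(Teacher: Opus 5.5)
Your argument is correct, but its combinatorial core differs from the paper's. The paper follows causal arrows forward. It starts from an ORV in the loop and passes to $\spann(\YY)$ of the second argument. It then asserts, via a footnote, that every element of that span is again in the first argument of some irreducible relation. Applying \cref{thm:span-join} repeatedly gives a strictly decreasing chain of sets $\Fut_s(\spann(\cdot))$, and finiteness forces some set to repeat. You instead isolate what an ACL means combinatorially: there is a nonempty $\mathcal{C}\subseteq\SC$ in which every element lies in the first argument of some relation whose second argument is contained in $\mathcal{C}$. You establish this by the peeling (greatest-fixed-point) argument: if peeling exhausts $S$, routing each removed node to an earlier-removed one gives an acyclic resolution, contradicting the ACL. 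The converse direction, that a closed $\mathcal{C}$ forces a cycle under every resolution, is immediate. This makes explicit the step the paper treats locally in its footnote, which only argues that a childless element would let one particular relation escape the loop. Your $\mathcal{C}$ says precisely which relations to follow so that one never leaves the cyclic part. It also replaces iteration plus pigeonhole by a single use of the geometric input. The paper's route, in exchange, mirrors the worked example of following $\cause$ arrows.

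Two small repairs are needed. First, drop the layered construction $C_0, C_1, \dots$ of your second step. ``Sources of the loop'' is undefined, and childless nodes entering some $C_{n+1}$ would break the strict inclusions; the closed set $\mathcal{C}$ should carry the proof. Second, \cref{thm:span-join} assumes $\XX=\spann(\XX)$, so you cannot apply it to $\mathcal{C}$ itself. Apply it instead to $\spann(\mathcal{C})$, whose elements still have relations with second arguments inside $\mathcal{C}$. This gives $\Fut_s(\mathcal{C})\subseteq\Fut_s(\bigcup_i\YY_i)\subsetneq\Fut_s(\spann(\mathcal{C}))=\Fut_s(\mathcal{C})$. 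Alternatively, bypass the lemma: closedness gives $\Fut_s(\mathcal{C}\setminus e)\subseteq\Fut(e)$, hence $\Fut_s(\mathcal{C}\setminus e)=\Fut_s(\mathcal{C})$, for every $e\in\mathcal{C}$. Conicality plus non-degeneracy then exclude every $e$ from $\spann(\mathcal{C})$, contradicting $\spann(\mathcal{C})\neq\emptyset$.
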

\begin{proof}
    Consider a set of affects relations which imply an ACL.
    Then by \cref{thm:deorder-loop}, we can transform it into a set of unconditional 0$^\text{th}$-order affects relations, with the same implications for causal inference and compatibility, and continue with this transformed set.
    Therefore, in the rest of the proof, we only consider sets of 0$^\text{th}$-order affects relations entailing ACLs.

Choose an ORV $e_\XX$ that is part of a causal loop, as inferrable from the presence of a set of irreducible affects relations.
    Then, it must be part of at least one irreducible affects relation $X \affects Y$ implying $e_X \cause Y$ with $Y \subset S \setminus e_X$.
    This yields
    $\Fut_s (\spann(\YY)) \subsetneq \Fut (e_\XX)$ for any non-degenerate embedding due to conicality and $e_\XX \not\in \spann(\YY)$.

    Iterating, we consider $\spann(\YY)$. If $\YY$ is to be part of a causal loop, for all $e^i_\YY \in \spann(\YY)$, there is $Z_i \subset Z$ such that we can infer $e^i_Y \cause Z_i$ due to an irreducible affects relation.\footnote{
        If there were $e^i_\YY \in \spann(\YY)$ such that there is no such $Z_i$, $e_X \cause e^i_Y$ would be one option to resolve $e_X \cause Y$, with $e^i_Y$ being a childless node in the causal structure.
        Therefore, the causal relation $e_X \cause Y$ could not be part of a causal loop, would be in contradiction to our assumption.}
    This yields
    $
        \Fut_s (\spann(\ZZ_i)) \subsetneq \Fut (e^i_\YY) \: \forall i \ \implies \
        \Fut_s (\spann(\ZZ)) \subsetneq \Fut_s (\spann(\YY))
    $
    by \cref{thm:span-join}.

    Iterating by the same argument, we can chain these strict subset relations with one another.
    However, the graph is finite and due to $\spann(\AA) \neq \emptyset$ for any set of ORVs $\AA$, the chain continues indefinitely~-- as otherwise, we would not have a loop.
    Hence, there exist sets $\SC$ which appear multiple times in the chain, yielding $\Fut_s (\spann(\SC)) \neq \Fut_s (\spann(\SC))$.
    This poses a contradiction, thereby proving the claim. $\lightning$
\end{proof}

\section{No Causal Loops without Clustering} 
\label{sec:clus}

Consider an affects relation $X \affects Y \given \doo(Z)$, with $X,Y,Z \subset S$.
Then \emph{clustering} encodes the idea that the affects relation no longer holds when restricting to strict subsets of $X$, $Y$ or $Z$.
In the context of this work, we will primarily work with clustering in the interventional arguments, for which we give formal definitions.

\begin{definition}[Clustering in the first argument~\cite{Grothus2024}]
\label{def: clus1}
  An affects relation $X\vDash Y \given \mathrm{do}(Z)$ is called clustered in the first argument (denoted Clus$_1$) if $|X|\geq 2$ and there exists no $s_X\subsetneq X$ such that   $s_X\vDash Y \given \mathrm{do}(Z)$. 
\end{definition}

\begin{definition}[Clustering in the third argument~\cite{Grothus2024}]
\label{def: clus3}
  An affects relation $X\vDash Y \given \mathrm{do}(Z)$ is called clustered in the third argument (denoted Clus$_3$) if $|Z|\geq 1$ and there exists no $s_Z\subsetneq Z$ such that $X\vDash Y \given \mathrm{do}(s_Z)$. 
\end{definition}
Generally, clustering is a signature of a specific form of fine-tuning of the underlying causal mechanisms \cite{Grothus2024}.

As established there, the presence of clustering in particular allows to vastly simplify the characterisation of signalling and causal inference in a given causal model.

\begin{corollary}
    \label{thm:no-clus-conditional}
    Consider an affects relation $X \affects Y \given \doo(Z), W$ in a causal model without clustering (in the first three arguments).
    Then there exist $e_X \in X,\ e_{YW} \in YW$ such that $e_X \affects e_{YW}$.
\end{corollary}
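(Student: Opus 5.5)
Starting from $X \affects Y \given \doo(Z), W$, I would first move the conditioning set into the second argument, then remove the interventional argument, then shrink the first argument down to one element, and finally shrink the second argument.

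\textbf{Step 1: absorb $W$.} Lemma~IV.8 of~\cite{VVC} (as used in the proof of \cref{thm:deorder}) turns $X \affects Y \given \doo(Z), W$ into an unconditional higher-order relation $X \affects YW \given \doo(Z)$. Strictly, that lemma is invoked above for the Irred$_1$ version. I expect its underlying argument, that a difference of conditionals forces a difference of joint distributions over $YW$, to go through without irreducibility. If it does not, I will argue directly from \cref{def:affects-cond}. If $P(Y\given x,z,w)$ differs between the two interventional distributions, then either $P(w\given\cdot)$ differs or $P(Y,w\given\cdot)$ differs. In both cases the joint distribution over $YW$ differs for some values.

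\textbf{Step 2: remove $\doo(Z)$.} If $Z=\emptyset$, skip this step. Otherwise no clustering in the third argument means the relation is not Clus$_3$. By \cref{def: clus3}, there is therefore a strict subset $s_Z \subsetneq Z$ with $X \affects YW \given \doo(s_Z)$. Iterating on $s_Z$ strictly decreases $|Z|$. Since $Z$ is finite, after finitely many steps I reach a $0^\text{th}$-order relation $X \affects YW$.

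\textbf{Step 3: shrink the first argument.} If $|X|\ge 2$, the relation is not Clus$_1$. By \cref{def: clus1}, some $s_X \subsetneq X$ satisfies $s_X \affects YW$, with $s_X$ nonempty by \cref{def:affects-cond}. Iterating gives a single $e_X$ with $e_X \affects YW$.

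\textbf{Step 4: shrink the second argument.} Here I use the absence of clustering in the second argument, defined analogously to Clus$_1$ as in~\cite{Grothus2024}. This yields a strict nonempty subset of $YW$ that $e_X$ still affects, and iterating gives a single $e_{YW}$ with $e_X \affects e_{YW}$.

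\textbf{Main obstacle.} The delicate point is whether the hypothesis ``no clustering'' applies to the \emph{derived} relations produced at each stage, not only to the original one. I read the statement as the causal model having no clustered affects relations at all, which makes every iteration legitimate. Step~4 additionally needs to be checked for compatibility with the marginalisation fact that affecting a set does not imply affecting a subset; that is exactly what Clus$_2$ excludes. The only other care needed is in Step~1, to confirm that conditioning can be absorbed without irreducibility assumptions.
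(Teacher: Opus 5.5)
Your proof is correct and follows the paper's route. The paper's proof is a one-line appeal to Corollary~D.7 of \cite{Grothus2024}, generalised to conditional relations: your Step~1 (absorbing $W$ into the second argument, which your marginalisation argument justifies without any irreducibility assumption) supplies that generalisation, and Steps~2--4 reconstruct the iterative no-Clus$_3$/Clus$_1$/Clus$_2$ reduction underlying D.7. Your reading of ``no clustering'' as applying to every affects relation of the model, including the derived ones, also matches how the paper uses this hypothesis.
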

\begin{proof}
    This follows as direct generalisation of Corollary~D.7 of \cite{Grothus2024} to conditional affects relations.
\end{proof}

\begin{lemma}
    An analogous statement to \cref{thm:deorder}
    holds when restricting to causal models which feature no clustering in the third argument rather than restricting to conical embeddings.
\end{lemma}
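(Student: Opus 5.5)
I plan to rerun the proof of \cref{thm:deorder}, keeping its case split (\textbf{I}/\textbf{II}) and the recursion over $Z$. The only change is to replace the two places where conicality was used: \cref{thm:location-symmetry} in case \textbf{I}, and \cite[Lemma~F.2]{Grothus2024} in case \textbf{II}. In both places the substitute will be the absence of Clus$_3$. The causal-inference part, \cref{eq:deorder-caus}, never used conicality, so it carries over verbatim. What remains is to recover the compatibility condition \cref{eq:deorder-comp} for an arbitrary (possibly non-conical) compatible embedding.

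The first step follows the original proof. By Lemma IV.8 of~\cite{VVC}, I pass to $X \affects YW \given \doo(Z)$ Irred$_1$, so that $e_X \affects YW \given \doo(ZX\setminus e_X)$ holds for all $e_X$. The key observation is then the following. If the causal model has no clustering in the third argument, then for any relation $X \affects YW \given \doo(Z)$ with $Z \neq \emptyset$ there is a strict subset $s_Z \subsetneq Z$ with $X \affects YW \given \doo(s_Z)$. Iterating this down, the set of all $s_Z \subseteq Z$ for which the relation survives is closed under removing suitable elements, and in particular contains $s_Z=\emptyset$. Hence we always get the $0^\text{th}$-order relation $X \affects YW$. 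The plan is to apply this at the level of the single-element relations $e_X \affects YW \given \doo(ZX\setminus e_X)$ instead of at the level of the whole set $X$. Applying absence of Clus$_3$ repeatedly to each of these gives, in case \textbf{II}, directly $e_X \affects YW \given \doo(ZX\setminus e_Z e_X)$ for all $e_X$; this is the same reduction that conicality provided via Lemma~F.2. In case \textbf{I}, absence of Clus$_3$ should let me pick an $s_Z$ for which the Irred$_3$ relation $X\affects YW\given \doo(s_Z)$ holds and whose elements satisfy $e_Z \affects YW \given \doo(\cdot)$ with conditioning sets shrinkable down to $XZ$-subsets not containing $e_Z$.

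For compatibility I then aim to derive, for every $e \in X s_Z$, a relation of the form $e \affects YW \given \doo(T)$ with $T \subseteq X s_Z \setminus e$. Where possible I would instead derive $e \affects YW$ outright, using \cref{thm:no-clus-conditional}-type reasoning to strip the conditioning entirely. Once the unconditioned form is available, compatibility gives $\Fut(e)\supseteq \Fut_s(\YY\WW)$ immediately, with no need for \cref{thm:location-symmetry}. Intersecting over $e \in X s_Z$ yields \cref{eq:deorder-comp}. The recursion over $|Z|$ is unchanged, and it terminates either in case \textbf{I} or at $s_Z=\emptyset$.

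The main obstacle is the step that strips the conditioning. Absence of Clus$_3$ only guarantees that \emph{some} strict subset of the interventional set still supports the relation. I need a chain of such subsets ending at $\emptyset$ while the first argument stays fixed as the single element $e$, and while Irred$_1$ is kept only as far as compatibility requires. I would first try an induction on $|T|$ applied to the relation $e\affects YW\given\doo(T)$ itself: since the first argument is a singleton, no irreducibility in the first argument is needed, and absence of Clus$_3$ applies at each step. This should reduce $T$ to $\emptyset$ and give $e\affects YW$. Compatibility of a $0^\text{th}$-order singleton relation is automatically Irred$_1$, which then closes the argument.
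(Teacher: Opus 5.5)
Your proposal is correct, and its decisive step is exactly the paper's proof. The paper takes the Irred$_1$-implied relations $e_X \affects YW \given \doo(ZX\setminus e_X)$ and strips the whole interventional set by repeatedly invoking the absence of Clus$_3$ (precisely your induction on $|T|$ with the singleton first argument held fixed). This gives $e_X \affects YW$ for all $e_X\in X$, and compatibility plus intersection then yields $\Fut_s(\XX)\supseteq\Fut_s(\YY\WW)$, i.e.\ the statement with $s_Z=\emptyset$.

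The case \textbf{I}/\textbf{II} recursion you retain is therefore unnecessary. Also, absence of Clus$_3$ only supplies \emph{some} strict subset of the interventional set, not $e_X \affects YW \given \doo(ZX\setminus e_Z e_X)$ for a common $e_Z$; in case \textbf{II} that relation is just the case hypothesis. Keeping case \textbf{I} is harmless, though: \cref{thm:deduce-affects} is purely probabilistic, and the same stripping then handles each $e_Z\in s_Z$.
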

\begin{proof}
    Similarly to the proof of \cref{thm:deorder}, consider $X \affects Y \given \doo(Z), W$ which is Irred$_1$ and may or may not be Irred$_3$.
    For causal models without clustering
    in the third argument (cf.\ \cref{def: clus3}) of \emph{any} affects relations, it is easy to see the relations $e_X \affects YW \given \doo(Z X \setminus e_X)$, implied for all $e_X \in X$ by Irred$_1$, similarly imply $e_X \affects YW \ \forall e_X \in X$.
    By compatibility, this implies $\Fut(e_\XX) \supseteq \Fut_s(\YY\WW) \ \forall e_X \in X$, which we can equivalently write as $\Fut_s(\XX) \supseteq \Fut_s(\YY\WW)$.
    Hence, the statement follows with $s_Z = \emptyset$.
\end{proof}

\begin{theorem}
    For any spacetime, all compatible embeddings of ACLs rooted in a causal model without clustering are trivial.
\end{theorem}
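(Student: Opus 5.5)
The plan is to reduce everything to pairwise, single-RV affects relations. In that setting compatibility forces a chain of future-cone inclusions along any causal loop, and closing the loop then forces equality of locations.

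The first step mirrors \cref{thm:deorder-loop}, but uses the preceding lemma in place of \cref{thm:deorder}. Given a set of (possibly conditional, higher-order) affects relations certifying an ACL, I apply the preceding lemma to each relation that is Irred$_1$. In a model without clustering in the third argument, this replaces each relation by a $0^\text{th}$-order relation $X \affects YW$ with $s_Z=\emptyset$, while preserving the causal-inference consequences and the compatibility constraint $\Fut_s(\XX)\supseteq\Fut_s(\YY\WW)$. This needs no conicality.

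The second step uses \cref{thm:no-clus-conditional} and the absence of clustering in the first argument to break each relation down to single elements. Because the relation $X\affects YW$ is Irred$_1$, for every $e_X\in X$ we have $e_X \affects YW \given \doo(X\setminus e_X)$. By \cref{thm:no-clus-conditional}, applied to this relation, there is some $e_{YW}$ with $e_X \affects e_{YW}$. Compatibility of this single-element relation, which is trivially Irred$_1$, gives $\Fut(e_{\mathcal{YW}})\subseteq \Fut(e_\XX)$, that is, $O(e_X)\preceq O(e_{YW})$ in $\TT$. It also gives the causal relation $e_X\cause e_{YW}$.

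The third step closes the loop. Causal inference from the original relations guarantees a loop whichever element of the second argument is chosen for each relation. I therefore choose precisely the witnesses $e_{YW}$ supplied by \cref{thm:no-clus-conditional}; these choices are consistent with the inference rules because each is an actual pairwise affects relation. Under these choices the resulting graph must still contain a directed cycle $e^1\cause e^2\cause\dots\cause e^n\cause e^1$, each edge backed by a pairwise affects relation. Compatibility gives $O(e^1)\preceq O(e^2)\preceq\dots\preceq O(e^1)$, and antisymmetry of the partial order forces all these locations to coincide. The embedding is therefore degenerate, which is what I read ``trivial'' to mean.

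The main obstacle is the second step. \cref{thm:no-clus-conditional} gives only the existence of \emph{some} $e_X$ with a pairwise affects relation, whereas the argument needs one for \emph{every} $e_X\in X$ that lies on the loop. I would handle this by applying the corollary to each $e_X \affects YW\given\doo(X\setminus e_X)$ separately, with $X$ being the singleton $e_X$ and $X\setminus e_X$ placed in the do-argument. I would need to check that the corollary's hypothesis, absence of clustering in the third argument, legitimately removes these extra interventions. If it does not directly, I would first use absence of Clus$_3$ to drop $\doo(X\setminus e_X)$, obtaining $e_X\affects YW$, and then use absence of Clus$_1$ in the second argument (via the corollary's statement) to reach a single $e_{YW}$.
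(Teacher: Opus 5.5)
Your proposal is correct and follows the paper's proof. You reduce to $0^\text{th}$-order relations via the no-Clus$_3$ analogue of \cref{thm:deorder}, turn each relation into pairwise relations $e_X \affects e_{YW}$ by applying \cref{thm:no-clus-conditional} to the singleton relations $e_X \affects YW \given \doo(ZX\setminus e_X)$, and then collapse the resulting cycle by antisymmetry; your step of choosing the witnesses as the resolution of the loop is, if anything, tidier than the paper's ``iterating, if $e_Y$ is part of the loop''. Two small fixes: the per-element relations come from the original relation being Irred$_1$, not from $X \affects YW$ being Irred$_1$ (the reduction does not guarantee that), which is the fallback you already describe. Also, what you derive, namely that all mutually signalling ORVs on the cycle share one location, is exactly the notion of a trivial embedding from~\cite{VVC}, which is stronger than mere degeneracy.
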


\begin{proof}
    Consider a set of affects relations which imply an ACL.
    Then by \cref{thm:deorder-loop}, we can transform it into a set of unconditional 0$^\text{th}$-order affects relations, with the same implications for causal inference and compatibility, and continue with this transformed set.

    Choose an ORV $e_\XX$ that is part of a causal loop, as inferrable from the presence of a set of irreducible affects relations.
    Then, it must be part of at least one irreducible affects relation $X \affects Y$ with $X \ni e_X$, implying $e_X \cause Y$ with $Y \subset S \setminus e_X$.
    Due to \cref{thm:no-clus-conditional}, in absence of clustering each such affects relation can be resolved such that $e_X \affects e_Y$ for some $e_Y \in Y$, yielding the compatibility condition $e_\XX \preceq e_\YY$.
    Iterating, if $e_Y$ is part of the causal loop, it again admits some $e_Z$ such that $e_Y \affects e_Z$ and $e_\YY \preceq e_\ZZ$.
    Being restricted to unconditional $0^\text{th}$-order affects relations atomic both in their first and second argument, the resulting ACL must be a loop of single variables
    $e_\XX \preceq e_\YY \preceq e_\ZZ \preceq \ldots \preceq e_\XX$ (i.e.\ an ACL type 2 by~\cite{VVC}), with all variables involved collapsing into a single point in spacetime.
    Therefore, we obtain a degenerate embedding, which is furthermore \emph{trivial} according to the definitions of~\cite{VVC}, as multiple ORVs individually signalling to one another share the same location.
	(See also Lemma VI.2 of~\cite{VVC}.)
\end{proof}

\section{A different notion of irreducibility?}
\label{sec:irreducible-weak}
\enlargethispage*{3\baselineskip}

Considering the proof of \cref{thm:deorder}, it is natural to wonder whether one should consider a weaker notion of irreducibility (and dually, a stronger notion of reducibility) than the ones studied in \cite{VVC,Grothus2024}, to obtain a more well-behaved interplay between different arguments.

Specifically, for the first argument of an affects relation $X \affects Y$, we would require only $e_X \affects Y \given \doo(Z X \setminus e_X)$ for all $e_X \in X$ with respect to the first argument, while for the third argument, we would require $e_Z \affects Y \given \doo( X Z \setminus e_Z) \ \lor \ e_Z \affects Y \given \doo(Z \setminus e_Z)$ for all $e_Z \in Z$.
By contrast, the present definition of irreducibility uses \textit{for all $s_{X}\subset X$} and \textit{for all $s_{Z}\subset Z$} in place of $e_X$ and $e_Z$ respectively.

Due to these definitions, affects relations which are only weakly irreducible have a cardinality of $\abs{X} = 3$ or $\abs{Z} = 2$ at least, to respectively be weakly irreducible in this sense without being irreducible in the usual sense.
However, no example for a causal model which gives rise to an affects relation which is only weakly irreducible in some argument has been established so far.

Notably, this weaker notion is sufficient to reproduce the causal inference results for \enquote{irreducible} affects relations that have been established in Section~4.3 of \cite{Grothus2024}.
Aided by the absence of clustering in the third argument or conicality of the embedding, it is also sufficient to reproduce all compatibility results (cf.\ Appendix~F of \cite{Grothus2024}).
Indeed, the weaker definition is therefore sufficient to reproduce the entirety of our main results (disregarding \cref{sec:clus}).
In the non-conical case however, we can not rule out that replacing irreducibility in the definition of compatibility (cf.\ \cref{sec:overview}) with this weaker notion may pose a stronger restriction on embeddings.

However, we do not choose to generally adopt this notion, as this weaker notion actually implies some characteristics one would usually intuitively designate as reducible:
If an affects relation $X \affects Y$ is only weakly irreducible, for some $\tilde{s}_X \subsetneq X$, some \emph{reduced} affects relation $\tilde{s}_X \affects Y$ must hold as well.
\begin{lemma}
    Consider an affects relation $X \affects Y \given \doo(Z)$, which is Red$_1$, yet satisfies $e_X \affects Y \given \doo(Z X \setminus e_X)$ for all $e_X \in X$ and is hence weakly irreducible.
    Then there exists $s_X \subsetneq X$ with $\abs{s_X} \ge 2$ such that $X \setminus s_X \affects Y \given \doo(Z)$.
\end{lemma}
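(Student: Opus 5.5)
The plan is to unfold the definitions and run a one-step telescoping argument of the same kind as in the proof of \cref{thm:deduce-affects}. Recall from~\cite{VVC} that $X \affects Y \given \doo(Z)$ is Red$_1$ if there exists a non-empty $s_X \subsetneq X$ with $s_X \naffects Y \given \doo(Z X \setminus s_X)$. Irred$_1$ is the negation: every such $s_X$ satisfies $s_X \affects Y \given \doo(Z X\setminus s_X)$.

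\textbf{Step 1: fixing the reducing subset.} Fix a non-empty $s_X \subsetneq X$ witnessing Red$_1$. Weak irreducibility gives $e_X \affects Y \given \doo(ZX\setminus e_X)$ for every $e_X \in X$. Hence $s_X$ cannot be a singleton, so $\abs{s_X} \ge 2$. Since $s_X \neq X$, the complement $X\setminus s_X$ is non-empty, so $X\setminus s_X \affects Y \given \doo(Z)$ is a well-formed affects relation in the sense of \cref{def:affects-cond}.

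\textbf{Step 2: expanding the relations.} By \cref{def:affects-cond}, the hypothesis $X \affects Y \given \doo(Z)$ supplies values $x, z$ with
$P_{\cG_{\doo(XZ)}}(Y \given X=x, Z=z) \neq P_{\cG_{\doo(Z)}}(Y \given Z=z)$.
The non-affects relation $s_X \naffects Y \given \doo(ZX\setminus s_X)$ is the negation of an existential statement, so it holds for \emph{all} values. In particular, for the restrictions of $x$ to $s_X$ and to $X\setminus s_X$ we get
$P_{\cG_{\doo(XZ)}}(Y \given X=x, Z=z) = P_{\cG_{\doo(Z, X\setminus s_X)}}(Y \given X\setminus s_X = x|_{X\setminus s_X}, Z=z)$.
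Substituting this equality into the inequality yields
$P_{\cG_{\doo(Z, X\setminus s_X)}}(Y \given X\setminus s_X = x|_{X\setminus s_X}, Z=z) \neq P_{\cG_{\doo(Z)}}(Y \given Z=z)$.
This is exactly $X\setminus s_X \affects Y \given \doo(Z)$, witnessed by the values $x|_{X\setminus s_X}$ and $z$.

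\textbf{Main obstacle.} The only delicate point is bookkeeping. One must use the precise Red$_1$ definition from~\cite{VVC}, namely that the complement is placed in the do-argument, $\doo(ZX\setminus s_X)$, rather than omitted. One must also note that non-affecting quantifies universally over values, so that it can be evaluated at the particular values $x, z$ that witness $X\affects Y\given\doo(Z)$. If the Red$_1$ definition were instead phrased with $s_X \naffects Y \given \doo(Z)$, the same substitution would no longer close the chain. In that case I would fall back on a two-term version of \cref{thm:deduce-affects} with $s_X$ in place of $e_X$, using the remark after that lemma that its proof generalises to subsets.
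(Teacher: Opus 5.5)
Your proof is correct and is essentially the paper's argument: you take the Red$_1$ witness $s_X$ with $s_X \naffects Y \given \doo(Z X\setminus s_X)$, and you telescope $P_{\cG_{\doo(Z)}}(Y|Z) \neq P_{\cG_{\doo(XZ)}}(Y|XZ) = P_{\cG_{\doo(\tilde s_X Z)}}(Y|\tilde s_X Z)$, with $\tilde s_X = X\setminus s_X$, to obtain $X\setminus s_X \affects Y \given \doo(Z)$. Your explicit observation that weak irreducibility rules out a singleton witness yields exactly the stated bound $\abs{s_X}\ge 2$ (the paper's proof writes $>2$, apparently a slip), and your fallback paragraph is unnecessary because the paper uses the same Red$_1$ definition you assumed.
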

\begin{proof}
    As $X \affects Y \given \doo(Z)$ is Red$_1$, there exists a non-empty subset $s_X\subsetneq X$ such that $s_X \naffects Y | \mathrm{do}(\tilde{s}_XZ)$, where $\tilde{s}_X:=X\backslash s_X$, with $\abs{s_X} > 2$.
    However, jointly with the original relation, this implies the \mbox{(in-)equality} chain
    $P_{\cG_{\doo(Z)}} (Y|Z) \neq P_{\cG_{\doo(XZ)}} (Y|XZ) = P_{\cG_{\doo(\tilde{s}_XZ)}} (Y|\tilde{s}_XZ)$,
    and hence, $\tilde{s}_X \affects Y \given \doo(Z)$.
\end{proof}
This matches the fact that as shown in \cite{Grothus2024}, no affects relation which is not reducible (in the usual sense) can be clustered.

One may also consider an analogously weakened property for clustering.
However, this property does not admit an as nice interpretation as the original version, which encodes the inability to signal when considering only a subset of RVs in some argument.
This e.g.\ naturally captures the properties of secret sharing protocols \cite{Cleve1999, PhysRevA.59.1829,PhysRevA.59.162,PhysRevA.61.042311,Elliot}, where information encoded in multiple systems cannot be recovered from subsets thereof.
Specifically, the weakened property would require only $X \setminus e_X \naffects Y \given \doo(Z)$ for all $e_X \in X$ for the first argument, rather than $s_X \naffects Y \given \doo(Z)$ for all $s_X \in Y$,
and similarly $X \naffects Y \given \doo(Z \setminus e_Z)$ for all $e_Z \in Z$ for the third argument.
Similarly, this weakened notion would be sufficient to reproduce our causal inference results of \cite{Grothus2024} for clustered affects relations.
We leave a deeper study of these weaker notions for future work, in particular the interpretation and implications of this weaker notion of clustering for compatible embeddings and causal loops.
\end{document}